\definecolor{mathcolor2}{rgb}{1,.2,1}
\def\square{\vcenter{\vbox{\hrule\hbox{\vrule
     \vbox to 8.8pt{\hbox to 10pt{}\vfill}\vrule}\hrule}}}
\newtheorem{thm}{Theorem}%section]
\newtheorem{lem}{Lemma}
\newtheorem{prop}{Proposition}
\newtheorem{exa}{Example}
\newtheorem{rem}{Remark}
\newcommand{\ord}{\operatorname{ord}}
\newcommand{\Tr}{\operatorname{Tr}}
\begin{document}

\title{Evaluation of the Hamming weights of a class of linear codes based on Gauss sums}
 %\thanks{Grants or other notes
%about the article that should go on the front page should be
%placed here. General acknowledgments should be placed at the end of the article.}

%\titlerunning{Short form of title}        % if too long for running head

\author{Ziling Heng       \and   Qin Yue}

%\authorrunning{Short form of author list} % if too long for running head

\institute{Z. Heng$^{1,2,3}$ \and Q. Yue$^{1,2,3}$ \at
              1, Department of Mathematics, Nanjing University of Aeronautics and Astronautics,
Nanjing, 211100, P. R. China;
\\2, State Key Laboratory of Cryptology, P. O. Box 5159, Beijing, 100878, P. R. China;
\\3, State Key Laboratory of Information Security, Institute of Information Engineering, Chinese Academy of Sciences, Beijing, 100093, P. R. China.
              \\The paper is supported by Foundation of Science and Technology on Information Assurance Laboratory (No. KJ-15-009); Fundamental Research Funds for the Central Universities (No. NZ2015102); Funding of Jiangsu Innovation Program for Graduate Eduction (the Fundamental Research Funds for the Central Universities; No. KYZZ15\underline{ }0086)
              \\\email{zilingheng@163.com}
              \\\email{yueqin@nuaa.edu.cn}
}

\date{Received: date / Accepted: date}
% The correct dates will be entered by the editor

\maketitle

\begin{abstract}
Linear codes with a few weights have been widely investigated in recent years. In this paper, we mainly use Gauss sums to represent the Hamming weights of a class of $q$-ary linear codes under some certain conditions, where $q$ is a power of a prime.  The lower bound of its minimum Hamming distance is obtained. In some special cases, we evaluate the weight distributions of the linear codes by semi-primitive Gauss sums and obtain some one-weight, two-weight linear codes. It is quite interesting that we find new optimal codes achieving some
bounds on linear codes. The linear codes in this paper can be used in secret sharing schemes, authentication codes and data storage systems.
\keywords{ linear codes\and weight distribution\and Gauss sums\and secret sharing schemes}
% \PACS{PACS code1 \and PACS code2 \and more}
\subclass{ 11T71\and 11T23}
\end{abstract}

\section{Introduction}
\label{intro}

  Let $\Bbb F_{q}$ denote the finite field with $q$ elements, where $q$ is a power of a prime $p$. An $[n,l,d]$ \emph{linear code} $\mathcal{C}$ over $\Bbb F_{q}$ is an $l$-dimensional subspace of $\Bbb F_q^{n}$ with minimum Hamming distance $d$. There are some bounds  on linear codes. Let $n_{q}(l,d)$ be the minimum length $n$ for which an $[n,l,d]$ linear code over $\Bbb F_{q}$ exists. The well-known \emph{Griesmer bound} is given by
  $$n_{q}(l,d)\geq \sum_{i=0}^{l-1}\lceil \frac{d}{q^{i}}\rceil.$$ The \emph{Singleton bound} is given by
   $$n_{q}(l,d)\geq l+d-1.$$
  An $[n,l,d]$ code is called \emph{optimal} if no $[n,l,d+1]$ code exists, and is called \emph{almost optimal} if the $[n,l,d+1]$ code is optimal.

Let $A_{i}$ denote the number of codewords with Hamming weight $i$ in a code $\mathcal{C}$ with length $n$. The weight enumerator of $\mathcal{C}$ is defined by
$$1+A_{1}z+\cdots+A_{n}z^{n}.$$
The sequence $(A_1,A_2,\cdots,A_n)$ is called the \emph{weight distribution }of $\mathcal{C}$. The code $\mathcal{C}$ is said to be $t$-weight if the number of nonzero $A_j,1\leq j\leq n$, in the sequence $(A_1,A_2,\cdots,A_n)$ equals $t$. Weight distribution is an interesting topic and was investigated in \cite{BM, CDY, DY, DGZ, HY, HY1, LY1, LY2, V1, V2, XM}. It could be  used to estimate the error-correcting capability and the error probability of error detection of a code.

Let $D=\{d_{1},d_{2},\ldots,d_{n}\}\subseteq \Bbb F_{r}$, where $r$ is a power of $q$. Let $\Tr_{r/q}$ be the trace function from $\Bbb F_{r}$ onto $\Bbb F_{q}$. A linear code of length $n$ over $\Bbb F_{q}$ is defined by
\begin{eqnarray*}\mathcal{C}_{D}=\{(\Tr_{r/q}(xd_{1}),\Tr_{r/q}(xd_{2}),\ldots,\Tr_{r/q}(xd_{n})):x\in \Bbb F_{r}\}.\end{eqnarray*}
The set $D$ is called the \emph{defining set} of $\mathcal{C}_{D}$. Although different orderings of the elements of $D$ result in different codes $\mathcal{C}_{D}$, these codes are permutation equivalent and have the same length, dimension and weight distribution. Hence, the orderings of the elements of $D$ will not affect the results in this paper. If the set $D$ is well chosen, the code $\mathcal{C}_{D}$ may have good parameters. This construction is generic in the sense that many known codes could be produced by selecting the defining set \cite{ D1, D, DLN, DH, DD1, DD, HY, HY0, LBA, LWL, TQ, WDX, X, ZF}. However, most of these known codes focused on linear codes over a prime field.

Let $\Tr_{q^{k}/q},\Tr_{q^{f}/q}$ denote the trace functions from $\Bbb F_{q^{k}}$ to $\Bbb F_{q}$ and $\Bbb F_{q^{f}}$ to $\Bbb F_{q}$, respectively. Let $f,k$ be positive integers such that $f|k$. In this paper, a class of $q$-ary linear codes $\mathcal{C}_{D}$ is defined by
\begin{eqnarray}\mathcal{C}_{D}=\{(\Tr_{q^{k}/q}(xd_{1}),\Tr_{q^{k}/q}(xd_{2}),\ldots,\Tr_{q^{k}/q}(xd_{n})):x\in \Bbb F_{q^{k}}\}\end{eqnarray}
with the defining set $D=\{x\in \Bbb F_{q^{k}}^{*}:\Tr_{q^{f}/q}(x^{\frac{q^{k}-1}{q^{f}-1}})+a=0\},$
where $a\in \Bbb F_{q}$. Let $N_{q^{k}/q^{f}}$ be the norm function from $\Bbb F_{q^{k}}$ to $\Bbb F_{q^{f}}$. In fact, the defining set $D$ is constructed from the composite function $\Tr_{q^{f}/q} \circ N_{q^{k}/q^{f}}$ due to $\Tr_{q^{f}/q}(x^{\frac{q^{k}-1}{q^{f}-1}})=\Tr_{q^{f}/q}(N_{q^{k}/q^{f}}(x))$. If $k=f$, then this construction is trivial and the weight distribution of $\mathcal{C}_{D}$ is very easy to obtain. Hence, we always assume $k>f$ in this paper. We investigate this class of linear codes in the following cases:
\begin{enumerate}
\item $a=0,f>1$;
\item $a\in \Bbb F_{q}^{*}, \gcd(\frac{k}{f},q-1)=1$.
\end{enumerate}
 We use Gauss sums to represent their Hamming weights and obtain lower bounds of their minimum distances. For $f=2$ in Case 1 and $f=1,2$ in Case 2, the weight distributions of the linear codes are explicitly determined. Some codes with one or two weights are obtained. In particular, we obtain some codes which are optimal or almost optimal with respect to some bounds on linear codes. Two-weights codes are closely related to strongly regular graphs, partial geometries and projective sets \cite{DD2, D2}. Linear codes with a few weights have applications in secret sharing schemes \cite{S, YD} and authentication codes \cite{DW}.

For convenience, we introduce the following notations in this paper:

\begin{tabular}{ll}
$\alpha$ & primitive element of $\Bbb F_{q^{k}}$,\\
$\beta=\alpha^{\frac{q^{k}-1}{q^{f}-1}}$ & primitive element of $\Bbb F_{q^{f}}$,\\
$\chi$ & canonical additive character of $\Bbb F_{q}$,\\
$\chi_{1}$ & canonical additive character of $\Bbb F_{q^{f}}$,\\
$\chi_{2}$ & canonical additive character of $\Bbb F_{q^{k}}$,\\
$\psi$ & multiplicative character of $\Bbb F_{q}$,\\
$\psi_{1}$ & multiplicative character of $\Bbb F_{q^{f}}$,\\
$\psi_{2}$ & multiplicative character of $\Bbb F_{q^{k}}$.
\end{tabular}

\section{Gauss sums}
Let $\Bbb F_{q}$ be a finite field with $q$ elements, where $q$ is a power of a prime $p$. The canonical additive character of $\Bbb F_{q}$ is defined as follows:
$$\chi: \Bbb F_{q}\longrightarrow \Bbb C^{*}, \chi(x)=\zeta_{p}^{\Tr_{q/p}(x)},$$
where $\zeta_{p}$ denotes the $p$-th primitive root of complex unity and $\Tr_{q/p}$ is the trace function from $\Bbb F_{q}$ to $\Bbb F_{p}$. The orthogonal property of additive characters is given by (see \cite{L}):
$$ \sum_{x\in \Bbb F_{q}}\chi(ax)=\left\{
\begin{array}{ll}
  q,   &      \mbox{if}\ a=0,\\
0 & \mbox{otherwise}.
\end{array} \right. $$
Let $\psi: \Bbb F_{q}^{*}\longrightarrow \Bbb C^{*}$ be a multiplicative character of $\Bbb F_{q}^{*}$. The trivial multiplicative character $\psi_{0}$ is defined by $\psi_{0}(x)=1$ for all $x\in \Bbb F_{q}^{*}$.  It is known from \cite{L} that all the multiplicative characters form a multiplication group $\widehat{\Bbb F}_{q}^{*}$, which is isomorphic to $\Bbb F_{q}^{*}$. The orthogonal property of a multiplicative character $\psi$ is given by  (see \cite{L}):
$$ \sum_{x\in \Bbb F_{q}^{*}}\psi(x)=\left\{
\begin{array}{ll}
  q-1,   &      \mbox{if}\ \psi=\psi_{0},\\
0,& \mbox{otherwise}.
\end{array} \right. $$

The \emph{Gauss sum} over $\Bbb F_{q}$ is defined by
$$G(\psi,\chi)=\sum_{x\in \Bbb F_{q}^{*}}\psi(x)\chi(x).$$
It is easy to see that $G(\psi_{0},\chi)=-1$ and $G(\bar\psi,\chi)=\psi(-1)\overline{G(\psi,\chi)}$. If $\psi \neq \psi_{0}$, we have $|G(\psi,\chi)|=\sqrt{q}$.  In this paper, Gauss sum is an important tool to compute exponential sums. In general, the explicit determination of Gauss sums is a difficult problem. In some cases, Gauss sums are explicitly determined in \cite{BEW, DY, L}.

In the following, we state the Gauss sums in the semi-primitive case which will be used in this paper.
\begin{lem} \cite{BEW}
Let $\lambda$ be a multiplicative character of order $N$ of $\Bbb F_{r}^{*}$ and $\rho$ the canonical additive character of $\Bbb F_{r}$. Assume that $N\neq 2$ and there exists a least positive integer $j$ such that $p^{j}\equiv -1\pmod{N}$. Let $r=p^{2j\gamma}$ for some integer $\gamma$. Then the Gauss sums of order $N$ over $\Bbb F_{r}$ are given by
\begin{eqnarray*}G(\lambda,\rho)=\left\{
\begin{array}{ll}
(-1)^{\gamma-1}\sqrt{r}, & \mbox{if}\ p=2,\\
(-1)^{\gamma-1+\frac{\gamma(p^{j}+1)}{N}}\sqrt{r}, & \mbox{if}\ p\geq3.
\end{array} \right.\end{eqnarray*}
Furthermore, for $1\leq s \leq N-1$, the Gauss sums $G(\lambda^{s},\rho)$ are given by
\begin{eqnarray*}G(\lambda^{s},\rho)=\left\{
\begin{array}{ll}
(-1)^{s}\sqrt{r}, & \mbox{if}\ N\ is\ even,\ p,\ \gamma\ and\ \frac{p^{j}+1}{N}\ are\ odd,\\
(-1)^{\gamma-1}\sqrt{r}, & \mbox{otherwise}.
\end{array} \right.\end{eqnarray*}
\end{lem}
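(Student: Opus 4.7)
The plan is to reduce via the Davenport--Hasse lifting theorem to the base field $\Bbb F_{p^{2j}}$, then to exploit the semi-primitive hypothesis $p^j \equiv -1 \pmod N$ to evaluate the Gauss sum explicitly there. Let $\lambda_0$ be a multiplicative character of $\Bbb F_{p^{2j}}^*$ of order $N$ and $\rho_0$ its canonical additive character; then $\lambda$ is the lift $\lambda_0 \circ N_{\Bbb F_r/\Bbb F_{p^{2j}}}$, and Davenport--Hasse gives
$$G(\lambda,\rho) = (-1)^{\gamma-1}\, G(\lambda_0,\rho_0)^{\gamma}.$$
This already accounts for the factor $(-1)^{\gamma-1}$ in the statement, so it suffices to evaluate $G(\lambda_0,\rho_0)$ and to track how its sign propagates under the $\gamma$-th power.

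In the base case the semi-primitive condition yields $\lambda_0^{p^j} = \overline{\lambda_0}$. Applying the Galois automorphism $\sigma:\zeta_N \mapsto \zeta_N^{p^j}$ (extended to fix $\zeta_p$) to $G(\lambda_0,\rho_0)$ gives $\sigma(G(\lambda_0,\rho_0)) = \lambda_0(-1)\,\overline{G(\lambda_0,\rho_0)}$; combined with $|G(\lambda_0,\rho_0)|^2 = p^{2j}$ and the observation that in odd characteristic the semi-primitive condition forces $N \mid (p^{2j}-1)/2$ and hence $\lambda_0(-1)=1$, one deduces $G(\lambda_0,\rho_0) = \epsilon_0\, p^j$ with $\epsilon_0 \in \{+1,-1\}$. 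The same analysis applied to $\lambda_0^s$ (which also satisfies the semi-primitive condition with the same $j$) gives $G(\lambda_0^s,\rho_0) = \epsilon_s\, p^j$.

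The main obstacle is pinning down these signs. For this I would invoke Stickelberger's theorem to compute the $p$-adic valuation of $G(\lambda_0,\rho_0)$ at each prime of $\Bbb Z[\zeta_N]$ above $p$, or equivalently iterate the Jacobi-sum identity $G(\lambda_0,\rho_0)^2 = \lambda_0(-1)\, J(\lambda_0,\lambda_0)\, G(\lambda_0^2,\rho_0)$ modulo such a prime. For $p=2$ this yields $\epsilon_0=-1$, and Davenport--Hasse then produces $(-1)^{\gamma-1}\sqrt{r}$; for odd $p$, the same technique gives $\epsilon_0 = (-1)^{(p^j+1)/N}$, producing the exponent $\gamma(p^j+1)/N$ in the lift. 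The exceptional subcase in the formula for $G(\lambda^s,\rho)$ ($N$ even and $p$, $\gamma$, $(p^j+1)/N$ all odd) is precisely the regime in which $\epsilon_s/\epsilon_1$ genuinely depends on the parity of $s$; in every other configuration the various sign factors collapse into the uniform $(-1)^{\gamma-1}\sqrt{r}$, and the bookkeeping is purely arithmetic once the base-case sign $\epsilon_0$ is in hand.
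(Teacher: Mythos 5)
The paper offers no proof of this lemma at all --- it is quoted directly from Berndt--Evans--Williams \cite{BEW} --- so your sketch can only be measured against the standard argument in that reference; your overall architecture (Davenport--Hasse descent to $\Bbb F_{p^{2j}}$, purity of the base-field sum from $\lambda_0^{p^j}=\bar\lambda_0$, Stickelberger or Jacobi-sum congruences to pin the sign) is indeed that standard route. But one of your concrete claims is simply wrong: the base-case sign for $p=2$ is $\epsilon_0=+1$, not $-1$. Over $\Bbb F_{2^{2j}}$ the semi-primitive Gauss sum equals $+p^j$; for instance, for the cubic character of $\Bbb F_4$ ($N=3$, $j=1$, $\gamma=1$) one computes directly $G(\lambda_0,\rho_0)=1-\zeta_3-\zeta_3^2=2=+\sqrt{4}$, matching the lemma's value $(-1)^{\gamma-1}\sqrt{r}=+\sqrt{r}$ at $\gamma=1$. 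Your $\epsilon_0=-1$ is also internally inconsistent: fed into your own Davenport--Hasse relation it gives $(-1)^{\gamma-1}(-1)^{\gamma}\sqrt{r}=-\sqrt{r}$ for every $\gamma$, not the asserted $(-1)^{\gamma-1}\sqrt{r}$. A smaller gap in the same step: from $\sigma(G)=\lambda_0(-1)\overline{G}$ and $|G|=p^j$ alone you cannot conclude $G$ is real; you also need the Frobenius invariance $G(\lambda^p,\rho)=G(\lambda,\rho)$, which yields $\sigma(G)=G$ and hence $G=\overline{G}$ once $\lambda_0(-1)=1$.

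The ``furthermore'' clause is also not the purely arithmetic bookkeeping you describe, because your premise that $\lambda_0^s$ ``satisfies the semi-primitive condition with the same $j$'' fails whenever $\gcd(s,N)>1$: the order drops to $N_s=N/\gcd(s,N)$ and the least exponent drops to some $j_s$ dividing $j$ (e.g.\ $N=9$, $p=2$ has $j=3$, while $\lambda^3$ has order $3$ with $j_s=1$), so one must rerun the first part with parameters $(N_s,\,j_s,\,\gamma_s=\gamma j/j_s)$ and verify that all the resulting signs collapse into the stated dichotomy. Worse, when $N$ is even the power $s=N/2$ gives $N_s=2$, which is excluded by the hypothesis $N\neq 2$, so that case requires the classical evaluation of the quadratic Gauss sum over $\Bbb F_{p^{2j\gamma}}$ as a separate input. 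None of this defeats your strategy --- it is exactly the case analysis carried out in \cite{BEW} --- but as written the sketch asserts a false sign at $p=2$ and waves past the one place where the second half of the lemma demands genuine work.
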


\section{The case $a=0$}
Let $f$ be a positive integer such that $f|k$ and $k>f>1$. In this section, we investigate $\mathcal{C}_{D}$ defined as in Equation (1) with the defining set
\begin{eqnarray*}D=\{x\in \Bbb F_{q^{k}}^{*}:\Tr_{q^{f}/q}(x^{\frac{q^{k}-1}{q^{f}-1}})=0\}.\end{eqnarray*}
In the following, we will find that the condition $f>1$ ensures that the length of this code is not zero.

Now we begin to compute the length of $\mathcal{C}_{D}$. Since the norm function
$$N_{q^{k}/q^{f}}: \Bbb F_{q^{k}}^{*}\longrightarrow \Bbb F_{q^{f}}^{*},x\longmapsto y=x^{\frac{q^{k}-1}{q^{f}-1}}$$ is an epimorphism of two multiplicative groups and the trace function $\Tr_{q^{f}/q}:\Bbb F_{q^{f}}\longrightarrow \Bbb F_{q}$ is   an epimorphism of two additive groups, the length of $\mathcal C_D$ is equal to
\begin{eqnarray}n=|D|=|\ker(N_{q^{k}/q^{f}})|\cdot(|\ker(\Tr_{q^{f}/q})|-1)=\frac{(q^{k}-1)(q^{f}-q)}{q(q^{f}-1)}.\end{eqnarray}

Set $n_{0}=n+1=|\{x\in \Bbb F_{q^{k}}:\Tr_{q^{f}/q}(x^{\frac{q^{k}-1}{q^{f}-1}})=0\}|$. For each $b\in \Bbb F_{q^{k}}^{*}$, let
$$N_{b}=|\{x\in \Bbb F_{q^{k}}:\Tr_{q^{f}/q}(x^{\frac{q^{k}-1}{q^{f}-1}})=0\mbox{ and }\Tr_{q^{k}/q}(bx)=0\}|.$$
By the basic facts of additive characters, for each $b\in \Bbb F_{q^{k}}^{*}$ we have
\begin{eqnarray*}
N_{b}&=&\frac{1}{q^{2}}\sum_{x\in \Bbb F_{q^{k}}}(\sum_{y\in \Bbb F_{q}}\chi(\Tr_{q^{f}/q}(yx^{\frac{q^{k}-1}{q^{f}-1}}))(\sum_{z\in \Bbb F_{q}}\chi(\Tr_{q^{k}/q}(bzx))\\
&=&\frac{1}{q^{2}}\sum_{x\in \Bbb F_{q^{k}}}(\sum_{y\in \Bbb F_{q}}\chi_{1}(yx^{\frac{q^{k}-1}{q^{f}-1}}))(\sum_{z\in \Bbb F_{q}}\chi_{2}(bzx))\\
&=&q^{k-2}+\frac{1}{q^{2}}\sum_{x\in \Bbb F_{q^{k}}}(\sum_{y\in \Bbb F_{q}^{*}}\chi_{1}(yx^{\frac{q^{k}-1}{q^{f}-1}}))+\frac{1}{q^{2}}\sum_{x\in \Bbb F_{q^{k}}}(\sum_{z\in \Bbb F_{q}^{*}}\chi_{2}(bzx))\\
& &+\frac{1}{q^{2}}\sum_{x\in \Bbb F_{q^{k}}}\sum_{y\in \Bbb F_{q}^{*}}\sum_{z\in \Bbb F_{q}^{*}}\chi_{1}(yx^{\frac{q^{k}-1}{q^{f}-1}})\chi_{2}(bzx).
\end{eqnarray*}
It is obvious that
\begin{eqnarray*}\sum_{x\in \Bbb F_{q^{k}}}(\sum_{y\in \Bbb F_{q}^{*}}\chi_{1}(yx^{\frac{q^{k}-1}{q^{f}-1}}))
&=&q-1+\sum_{y\in \Bbb F_{q}^{*}}\sum_{x\in \Bbb F_{q^{k}}^{*}}\chi_{1}(yx^{\frac{q^{k}-1}{q^{f}-1}})\\
&=&q-1+\frac{q^{k}-1}{q^{f}-1}\sum_{y\in \Bbb F_{q}^{*}}\sum_{x\in \Bbb F_{q^{f}}^{*}}\chi_{1}(yx)\\
&=&q-1-(q-1)\frac{q^{k}-1}{q^{f}-1}=\frac{(q-1)(q^{f}-q^{k})}{q^{f}-1}.
\end{eqnarray*}
By the orthogonal relation of additive characters, we have
$$\sum_{x\in \Bbb F_{q^{k}}}(\sum_{z\in \Bbb F_{q}^{*}}\chi_{2}(bzx))=\sum_{z\in \Bbb F_{q}^{*}}\sum_{x\in \Bbb F_{q^{k}}}\chi_{2}(bzx)=0.$$
Let $$ \Omega(b):=\sum\limits_{x\in \Bbb F_{q^{k}}}\sum\limits_{y\in \Bbb F_{q}^{*}}\chi_{1}(yx^{\frac{q^{k}-1}{q^{f}-1}})\chi_{2}(bx)\mbox{ and } \Delta(b):=\sum\limits_{z\in \Bbb F_{q}^{*}}\Omega(bz).$$ Then we have
\begin{eqnarray}
N_{b}=q^{k-2}+\frac{(q-1)(q^{f}-q^{k})}{q^{2}(q^{f}-1)}+\frac{1}{q^{2}}\Delta(b).
\end{eqnarray}

To compute the exponential sum $\Delta(b),b\in \Bbb F_{q^{k}}^{*}$, we need the following lemmas.

\begin{lem}\cite{L}
Let $r$ be a power of a prime $p$. Let $\rho$ be a nontrivial additive character of $\Bbb F_{r}$, $m\in \Bbb N$, and $\lambda$ a multiplicative character of $\Bbb F_{r}$ of order $s=\gcd(m,r-1)$. Then
$$\sum_{c\in \Bbb F_{r}}\rho(a_{0}c^{m}+a_{1})=\rho(a_1)\sum_{j=1}^{s-1}\bar\lambda^{j}(a_0)G(\lambda^{j},\rho)$$
for any $a_0,a_1\in \Bbb F_{r}$ and $a_0\neq 0$.
\end{lem}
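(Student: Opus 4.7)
The plan is to reduce the claimed identity to a standard manipulation of character sums on $\Bbb F_{r}^{*}$. Since $\rho$ is additive we have $\rho(a_{0}c^{m}+a_{1})=\rho(a_{1})\rho(a_{0}c^{m})$, so the factor $\rho(a_{1})$ pulls outside and it suffices to establish
$$\sum_{c\in \Bbb F_{r}}\rho(a_{0}c^{m})=\sum_{j=1}^{s-1}\bar\lambda^{j}(a_{0})\,G(\lambda^{j},\rho).$$
First I would analyse the map $c\mapsto c^{m}$ on $\Bbb F_{r}^{*}$: because $\Bbb F_{r}^{*}$ is cyclic of order $r-1$ and $s=\gcd(m,r-1)$, this map is exactly $s$-to-$1$ onto the unique subgroup $H$ of index $s$ (the subgroup of $s$-th powers). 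Isolating the term $c=0$, which contributes $\rho(0)=1$, this yields
$$\sum_{c\in \Bbb F_{r}}\rho(a_{0}c^{m})=1+s\sum_{d\in H}\rho(a_{0}d).$$

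Next I would replace the indicator of $H$ by its multiplicative-character expansion. Since $\lambda$ has exact order $s$, the standard orthogonality relation on $\Bbb F_{r}^{*}$ gives
$$\mathbf{1}_{d\in H}=\frac{1}{s}\sum_{j=0}^{s-1}\lambda^{j}(d),\qquad d\in\Bbb F_{r}^{*}.$$
Inserting this and swapping the order of summation produces
$$s\sum_{d\in H}\rho(a_{0}d)=\sum_{j=0}^{s-1}\sum_{d\in \Bbb F_{r}^{*}}\lambda^{j}(d)\rho(a_{0}d).$$
The $j=0$ term equals $\sum_{d\in \Bbb F_{r}^{*}}\rho(a_{0}d)=-1$ by additive orthogonality (using $a_{0}\neq 0$), which precisely cancels the $+1$ coming from $c=0$. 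For each $j$ with $1\le j\le s-1$, the change of variable $d=a_{0}^{-1}e$ converts the inner sum into
$$\sum_{d\in \Bbb F_{r}^{*}}\lambda^{j}(d)\rho(a_{0}d)=\bar\lambda^{j}(a_{0})\sum_{e\in \Bbb F_{r}^{*}}\lambda^{j}(e)\rho(e)=\bar\lambda^{j}(a_{0})G(\lambda^{j},\rho),$$
and multiplying through by $\rho(a_{1})$ delivers the claimed formula.

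The main obstacle is not really an obstacle but a bookkeeping check: one must verify that $c\mapsto c^{m}$ is genuinely an $s$-to-$1$ surjection onto $H$, and that a character of exact order $s$ is trivial precisely on $H$. Both follow immediately from the cyclicity of $\Bbb F_{r}^{*}$ together with $s=\gcd(m,r-1)$. No deeper input (no explicit Gauss-sum evaluation, no Stickelberger-type identity) is needed for the lemma itself; it is a purely structural identity that feeds the Gauss-sum calculations pursued later in the paper.
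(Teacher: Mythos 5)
Your proof is correct and complete; note, however, that the paper offers no proof of this statement at all---it is quoted as Lemma~2 directly from Lidl and Niederreiter's \emph{Finite Fields} (reference [L], Theorem~5.30). Your argument (counting preimages of $c\mapsto c^{m}$ via the character-sum indicator of the index-$s$ subgroup, cancelling the $j=0$ term against the $c=0$ contribution, and substituting $d=a_{0}^{-1}e$ to produce the Gauss sums) is precisely the standard textbook derivation, so it matches the cited source in substance.
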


\begin{lem} \cite[Theorem 5.14] {L} (Davenport-Hasse Theorem)
Let $r$ be a power of a prime $p$. Let $\rho$ be an additive and $\lambda$ a multiplicative character of $\Bbb F_{r}$, not both of  them trivial. Suppose $\rho$ and $\lambda$ are lifted to characters $\rho'$ and $\lambda'$, respectively, of the extension field $E$ of $\Bbb F_{r}$ with $[E:\Bbb F_{r}]=t$. Then
$$G(\lambda',\rho')=(-1)^{t-1}G(\lambda,\rho)^{t}.$$
\end{lem}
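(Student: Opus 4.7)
The plan is to prove the identity by introducing the formal $L$-function associated to the pair $(\lambda,\rho)$. For each monic polynomial $f = X^d + c_1 X^{d-1} + \cdots + c_d$ in $\Bbb F_r[X]$, set $\chi_0(f) := \lambda((-1)^d c_d)\,\rho(-c_1)$; since $(-1)^d c_d$ is the product of the roots and $-c_1$ is their sum, $\chi_0$ is completely multiplicative on the monoid of monic polynomials. Consequently
\[
L(u) := \sum_{f \text{ monic}} \chi_0(f)\, u^{\deg f} \;=\; \prod_{f \text{ monic irred.}} \bigl(1 - \chi_0(f)\, u^{\deg f}\bigr)^{-1}.
\]

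The next step is a double evaluation of $L(u)$. On one hand, expanding by degree, the $d=1$ coefficient is $\sum_{c \in \Bbb F_r} \lambda(-c)\rho(-c) = G(\lambda,\rho)$ (after reindexing and using $\lambda(0)=0$), while for $d \geq 2$ the independent sums over $c_2,\dots,c_{d-1}$ contribute factors of $r$ and the residual sum over either $c_1$ (if $\rho$ is nontrivial) or $c_d$ (if $\lambda$ is nontrivial) vanishes by orthogonality, so $L(u) = 1 + G(\lambda,\rho)\,u$. On the other hand, taking the logarithm of the Euler product and extracting the coefficient of $u^t$ yields
\[
\frac{1}{t}\sum_{d \mid t} d \!\!\sum_{\substack{m \text{ monic irred.} \\ \deg m = d}} \chi_0(m)^{t/d}.
\]
To identify this with $G(\lambda',\rho')$, I partition $\Bbb F_{r^t}^{*}$ by minimal polynomial: each $\alpha \in \Bbb F_{r^t}^{*}$ with minimal polynomial $m$ of degree $d \mid t$ lies in $\Bbb F_{r^d}$ and satisfies $N_{r^t/r}(\alpha) = N_{r^d/r}(\alpha)^{t/d}$ and $\Tr_{r^t/r}(\alpha) = (t/d)\,\Tr_{r^d/r}(\alpha)$, so $\lambda'(\alpha)\rho'(\alpha) = \chi_0(m)^{t/d}$; since each such $m$ contributes its $d$ roots in $\Bbb F_{r^t}$, summing over $\alpha$ reproduces the inner double sum. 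Comparing with $\log(1 + G(\lambda,\rho)u) = \sum_{n\geq 1} (-1)^{n-1} G(\lambda,\rho)^n u^n / n$ and multiplying through by $t$ delivers $G(\lambda',\rho') = (-1)^{t-1} G(\lambda,\rho)^t$.

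The main obstacle is the bookkeeping in the identification step: one must invoke the convention $\lambda(0) = 0$ so that $m = X$ contributes nothing (matching the exclusion of $\alpha = 0$), verify the tower formulas for norm and trace on the nested extension $\Bbb F_r \subseteq \Bbb F_{r^d} \subseteq \Bbb F_{r^t}$, and confirm that the closed-form evaluation of $L(u)$ still holds in the two degenerate sub-cases where exactly one of $\lambda,\rho$ is trivial (the case both trivial is excluded by hypothesis). Once these technicalities are in place, equating the coefficients of $u^t$ is immediate and gives the Davenport--Hasse identity.
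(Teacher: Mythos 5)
The paper does not prove this lemma at all --- it quotes the Davenport--Hasse theorem directly from Lidl and Niederreiter \cite[Theorem 5.14]{L}, and your argument is precisely the classical $L$-series proof given there: the completely multiplicative function $\chi_0$ on monic polynomials, the evaluation $L(u)=1+G(\lambda,\rho)u$ via the vanishing of the degree-$d$ sums for $d\geq 2$, and the identification of the coefficient of $u^{t}$ in $\log L(u)$ with $\tfrac{1}{t}G(\lambda',\rho')$ through the norm/trace tower relations and the $d$ roots of each irreducible factor. Your proposal is correct, including the careful handling of the convention $\lambda(0)=0$ and of the degenerate sub-cases where exactly one character is trivial, so it matches the source's proof in all essentials.
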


\begin{lem}
Let $\chi_{1},\chi_2$ be the canonical additive characters of $\Bbb F_{q^{f}}$ and  $\Bbb F_{q^{k}}$, respectively. Let $f$ be a positive integer such that $f|k$. Then for $b\in \Bbb F_{q^{k}}^{*}$,
$$\sum_{x\in \Bbb F_{q^{k}}^{*}}\chi_{2}(bx^{q^{f}-1})=(-1)^{\frac{k}{f}-1}\sum_{\psi_{1}\in \widehat{\Bbb F}_{q^{f}}^{*}}G(\psi_{1},\chi_{1})^{\frac{k}{f}}\bar\psi_{1}(b^{\frac{q^{k}-1}{q^{f}-1}}).$$
\end{lem}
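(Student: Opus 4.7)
The plan is to turn the sum on the left into a sum of Gauss sums over $\mathbb F_{q^k}$ using Lemma~2, and then push those Gauss sums down to $\mathbb F_{q^f}$ via the Davenport--Hasse theorem (Lemma~3). Concretely, I would first apply Lemma~2 with $r=q^k$, $m=q^f-1$, $a_0=b$, $a_1=0$, and $\rho=\chi_2$. Because $f\mid k$, one has $\gcd(q^f-1,q^k-1)=q^f-1$, so the parameter $s$ in Lemma~2 equals $q^f-1$ and $\lambda$ may be taken to be any multiplicative character of $\mathbb F_{q^k}^*$ of exact order $q^f-1$. The lemma yields
$$\sum_{c\in\mathbb F_{q^k}}\chi_2\!\left(bc^{q^f-1}\right)=\sum_{j=1}^{q^f-2}\bar\lambda^{j}(b)\,G(\lambda^{j},\chi_2).$$
Peeling off the $c=0$ contribution on the left (which equals $1$) and appending the trivial-character term on the right (using $G(\psi_0,\chi_2)=-1$), I arrive at
$$\sum_{x\in\mathbb F_{q^k}^*}\chi_2\!\left(bx^{q^f-1}\right)=\sum_{j=0}^{q^f-2}\bar\lambda^{j}(b)\,G(\lambda^{j},\chi_2).$$

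Next I would reindex this sum by $\widehat{\mathbb F_{q^f}^*}$. The characters $\lambda^j$ with $0\le j\le q^f-2$ are exactly the multiplicative characters of $\mathbb F_{q^k}^*$ of order dividing $q^f-1$, equivalently those trivial on the subgroup $H=\{x^{q^f-1}:x\in\mathbb F_{q^k}^*\}$. Since $N_{q^k/q^f}(\alpha)=\alpha^{(q^k-1)/(q^f-1)}=\beta$ is a primitive element of $\mathbb F_{q^f}$, the subgroup $H$ coincides with $\ker N_{q^k/q^f}$, so each such character is uniquely of the form $\psi_1\circ N_{q^k/q^f}$ for some $\psi_1\in\widehat{\mathbb F_{q^f}^*}$. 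Under this bijection,
$$\bar\lambda^{j}(b)=\bar\psi_1\!\left(N_{q^k/q^f}(b)\right)=\bar\psi_1\!\left(b^{(q^k-1)/(q^f-1)}\right).$$

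Finally, since $\chi_2=\chi_1\circ\Tr_{q^k/q^f}$ and $\psi_1\circ N_{q^k/q^f}$ is the lift of $\psi_1$ to $\mathbb F_{q^k}^*$, Lemma~3 applied with extension degree $t=k/f$ gives
$$G\!\left(\psi_1\circ N_{q^k/q^f},\chi_2\right)=(-1)^{k/f-1}\,G(\psi_1,\chi_1)^{k/f},$$
and substituting this into the identity above produces the claimed formula (one checks that the $\psi_1=\psi_0$ case is consistent, since both sides of the Davenport--Hasse identity equal $-1$ in that case). The principal obstacle is the middle step: one must verify that $H=\ker N_{q^k/q^f}$ and match the reindexing of $\lambda^j$'s with the expression $\bar\psi_1(b^{(q^k-1)/(q^f-1)})$ coming from the norm; once this bookkeeping is done, the result is a direct combination of Lemmas~2 and~3.
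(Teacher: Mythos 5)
Your proof is correct and follows essentially the same route as the paper's: apply Lemma 2 with $s=\gcd(q^f-1,q^k-1)=q^f-1$ to expand the sum into Gauss sums over $\mathbb{F}_{q^k}$ (adjusting for $x=0$ and the trivial-character term via $G(\psi_0,\chi_2)=-1$), identify the characters of order dividing $q^f-1$ with the norm-lifts $\psi_1\circ N_{q^k/q^f}$, and finish with the Davenport--Hasse theorem at extension degree $t=k/f$. The only difference is cosmetic: you spell out the reindexing bookkeeping (that the $(q^f-1)$-th powers form $\ker N_{q^k/q^f}$, and the consistency of the $\psi_1=\psi_0$ term), which the paper treats as immediate.
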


\begin{proof}
By Lemma 2, we have
\begin{eqnarray*}
\sum_{x\in \Bbb F_{q^{k}}^{*}}\chi_{2}(bx^{q^{f}-1})&=&-1+\sum_{x\in \Bbb F_{q^{k}}}\chi_{2}(bx^{q^{f}-1})=-1+\sum_{j=1}^{q^{f}-2}\bar\psi_{2}^{j}(b)G(\psi_{2}^{j},\chi_{2})
=\sum_{j=0}^{q^{f}-2}\bar\psi_{2}^{j}(b)G(\psi_{2}^{j},\chi_{2}),
\end{eqnarray*}
where $\psi_{2}$ is a multiplicative character of order $q^{f}-1$ of $\Bbb F_{q^{k}}$.  The multiplicative character $\psi_2=\psi_1 \circ N_{q^{k}/q^{f}}$ can be seen as the lift of $\psi_1$ from $\widehat{\Bbb F}_{q^{f}}^{*}$ to $\widehat{\Bbb F}_{q^{k}}^{*}$. Note that $N_{q^{k}/q^{f}}$ is an epimorphism. Then $\ord(\psi_1)=\ord(\psi_2)=q^{f}-1$. Therefore, by Lemma 3,
\begin{eqnarray*}
\sum_{x\in \Bbb F_{q^{k}}^{*}}\chi_{2}(bx^{q^{f}-1})&=&\sum_{\psi_1\in \widehat{\Bbb F}_{q^{f}}^{*}}G(\psi_1 \circ N_{q^{k}/q^{f}},\chi_2)\bar\psi_1(N_{q^{k}/q^{f}}(b))\\
&=&(-1)^{\frac{k}{f}-1}\sum_{\psi_{1}\in \widehat{\Bbb F}_{q^{f}}^{*}}G(\psi_{1},\chi_{1})^{\frac{k}{f}}\bar\psi_{1}(b^{\frac{q^{k}-1}{q^{f}-1}}).
\end{eqnarray*}
\end{proof}

\begin{lem} Let $f$ be a positive integer such that $f>1$ and $f|k$, then
$$\Delta(b)=\frac{q^{f}(q-1)^{2}}{q^{f}-1}+\frac{(-1)^{\frac{k}{f}-1}q^{f}(q-1)^{2}}{q^{f}-1}\sum_{j=1}^{\frac{q^{f}-1}{q-1}-1}
\varphi^{j}(-1)G(\varphi^{^{j}},\chi_1)^{\frac{k}{f}-1}\bar\varphi^{j}(b^{\frac{q^{k}-1}{q^{f}-1}}),$$
where $\varphi$ is a multiplicative character of order $\frac{q^{f}-1}{q-1}$ of $\Bbb F_{q^{f}}$.
\end{lem}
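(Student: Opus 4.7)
The plan is to first eliminate the sum over $z$ by showing that $\Omega(bz)=\Omega(b)$ for every $z\in \Bbb F_q^*$, whence $\Delta(b)=(q-1)\Omega(b)$. Setting $M=(q^k-1)/(q^f-1)$, the change of variable $x\mapsto z^{-1}x$ in $\Omega(bz)$ converts $\chi_2(bzx)$ into $\chi_2(bx)$ and $\chi_1(yx^M)$ into $\chi_1(yz^{-M}x^M)$; since $z^{-M}\in\Bbb F_q^*$, a reindexing of the inner sum over $y\in\Bbb F_q^*$ absorbs this factor and recovers $\Omega(b)$.

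Next I would peel off the $x=0$ contribution to $\Omega(b)$, which equals $q-1$, and on the remainder expand the inner character by the Fourier inversion
$$\chi_1(yx^M)=\frac{1}{q^f-1}\sum_{\psi_1\in\widehat{\Bbb F}_{q^f}^*}G(\psi_1,\chi_1)\bar\psi_1(yx^M),$$
which is valid because $yx^M\in\Bbb F_{q^f}^*$. Exchanging the order of summation, the resulting inner $y$-sum is $\sum_{y\in\Bbb F_q^*}\bar\psi_1(y)$, which vanishes unless $\psi_1|_{\Bbb F_q^*}$ is trivial. The characters with this property form a cyclic subgroup of order $(q^f-1)/(q-1)$ generated by the character $\varphi$ in the statement, so only $\psi_1=\varphi^j$ with $0\leq j\leq(q^f-1)/(q-1)-1$ contribute.

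For each such character the remaining $x$-sum is $\sum_{x\in\Bbb F_{q^k}^*}\bar\varphi^j(N_{q^k/q^f}(x))\chi_2(bx)$. Recognising $\bar\varphi^j\circ N_{q^k/q^f}$ as the lift of $\bar\varphi^j$ to $\widehat{\Bbb F}_{q^k}^*$, the usual shift identity together with Davenport-Hasse (Lemma 3), exactly as in the proof of Lemma 4, evaluates it to $\varphi^j(N_{q^k/q^f}(b))\cdot(-1)^{k/f-1}G(\bar\varphi^j,\chi_1)^{k/f}$. The $j=0$ summand is handled separately and, after the overall factor $(q-1)/(q^f-1)$ and the final multiplication by $q-1$, it combines with the $x=0$ contribution to yield the constant $q^f(q-1)^2/(q^f-1)$ appearing in the stated formula.

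The main bookkeeping step, and the only real obstacle, is converting the product $G(\varphi^j,\chi_1)G(\bar\varphi^j,\chi_1)^{k/f}$ produced for $j\neq 0$ into the shape $G(\varphi^j,\chi_1)^{k/f-1}$ demanded by the statement. For this I would apply the identity $G(\psi,\chi_1)G(\bar\psi,\chi_1)=\psi(-1)q^f$ to extract one factor of $\psi(-1)q^f$ and lower the Gauss-sum exponent from $k/f$ to $k/f-1$, and then relabel the summation index via $\psi\leftrightarrow\bar\psi$, using $\psi(-1)=\bar\psi(-1)\in\{\pm 1\}$, to turn $\varphi^j(N_{q^k/q^f}(b))$ into $\bar\varphi^j(N_{q^k/q^f}(b))$. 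Collecting the resulting signs and prefactors produces precisely the formula claimed for $\Delta(b)$.
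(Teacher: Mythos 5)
Your proof is correct, and it reaches the formula of Lemma 5 by a route that genuinely differs from the paper's in two places, although the endgame coincides. First, the paper never exploits the invariance $\Omega(bz)=\Omega(b)$: it carries the $z$-average through the whole computation, eventually evaluating $\sum_{z\in\Bbb F_{q}^{*}}\bar\psi_1(z^{\frac{k}{f}})$ alongside $\sum_{y\in\Bbb F_{q}^{*}}\psi_1(y)$ by lifting both sums to $\Bbb F_{q^{f}}^{*}$ through the norm $N_{q^{f}/q}$, which yields the two divisibility conditions $\ord(\psi_1)\mid\frac{q^{f}-1}{q-1}$ and $\ord(\psi_1)\mid\frac{q^{f}-1}{q-1}\cdot\frac{k}{f}$ (the first implying the second). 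Your substitution $x\mapsto z^{-1}x$ --- valid because $z^{-M}=z^{-\frac{k}{f}}\in\Bbb F_{q}^{*}$ for $M=\frac{q^{k}-1}{q^{f}-1}$ --- collapses $\Delta(b)$ to $(q-1)\Omega(b)$ at the outset, so the second condition never arises; it also explains structurally why no hypothesis like $\gcd(\frac{k}{f},q-1)=1$ is needed when $a=0$, in contrast to Lemma 6, where the same substitution pushes $z^{\frac{k}{f}}$ into the factor $\chi(ay)$ and the gcd condition becomes essential. Second, where you expand $\chi_{1}(yx^{M})$ by Fourier inversion and then evaluate $\sum_{x\in\Bbb F_{q^{k}}^{*}}\bar\varphi^{j}(N_{q^{k}/q^{f}}(x))\chi_{2}(bx)$ by the shift identity plus Davenport--Hasse, the paper instead decomposes $\Bbb F_{q^{k}}^{*}$ into cosets of $\langle\alpha^{q^{f}-1}\rangle$, rewrites the inner sum via $x^{q^{f}-1}$, applies its Lemma 4, and only afterwards extracts $G(\bar\psi_1,\chi_1)$ by a substitution inside the resulting $\Bbb F_{q^{f}}$-sum; the two mechanisms have the same content (your step is essentially Lemma 4 repackaged, as you note), but yours avoids the coset bookkeeping. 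One consequence worth flagging: your intermediate expression carries $G(\varphi^{j},\chi_1)G(\bar\varphi^{j},\chi_1)^{\frac{k}{f}}$, the conjugate arrangement of the paper's $G(\varphi^{j},\chi_1)^{\frac{k}{f}}G(\bar\varphi^{j},\chi_1)$, and the relabeling $j\mapsto\frac{q^{f}-1}{q-1}-j$ you propose, together with $\varphi^{j}(-1)=\bar\varphi^{j}(-1)\in\{\pm1\}$, is exactly what reconciles them; likewise your separate treatment of $j=0$, giving $(q-1)^{2}+\frac{(q-1)^{2}}{q^{f}-1}=\frac{q^{f}(q-1)^{2}}{q^{f}-1}$, matches the paper's. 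What the paper's longer route buys is uniformity: the identical template transfers almost verbatim to the computation of $\Lambda(b)$ in Section 4, where your shortcut is unavailable.
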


\begin{proof}
Let $\Bbb F_{q^{k}}^{*}=\langle \alpha\rangle$ and $\beta=\alpha^{\frac{q^{k}-1}{q^{f}-1}}$. Then $\Bbb F_{q^{f}}^{*}=\langle \beta\rangle$. There is a coset  decomposition as follows:
$$\Bbb F_{q^{k}}^{*}=\bigcup_{j=0}^{q^{f}-2}\alpha^{j} \langle \alpha^{q^{f}-1}\rangle.$$ Hence,
\begin{eqnarray*}
\Omega(b)&=&q-1+\sum\limits_{x\in \Bbb F_{q^{k}}^{*}}\sum\limits_{y\in \Bbb F_{q}^{*}}\chi_{1}(yx^{\frac{q^{k}-1}{q^{f}-1}})\chi_{2}(bx)\\
&=&q-1+\sum\limits_{y\in \Bbb F_{q}^{*}}\sum_{j=0}^{q^{f}-2}\chi_{1}(y\beta^{j})\sum_{\omega \in  \langle \alpha^{q^{f}-1}\rangle}\chi_{2}(b\omega\alpha^{j})\\
&=&q-1+\frac{1}{q^{f}-1}\sum\limits_{y\in \Bbb F_{q}^{*}}\sum_{j=0}^{q^{f}-2}\chi_{1}(y\beta^{j})\sum_{x \in \Bbb F_{q^{k}}^{*}}\chi_{2}(bx^{q^{f}-1}\alpha^{j})\\
\end{eqnarray*}
By Lemma 4,
\begin{eqnarray*}
\Omega(b)&=&(q-1)+\frac{(-1)^{\frac{k}{f}-1}}{q^{f}-1}\sum_{x\in \Bbb F_{q^{f}}^{*}}\sum_{y\in \Bbb F_{q}^{*}}\chi_1(yx)\sum_{\psi_1\in \widehat{\Bbb F}_{q^{f}}^{*}}G(\psi_1,\chi_1)^{\frac{k}{f}}\bar\psi_1(b^{\frac{q^{k}-1}{q^{f}-1}}x)\\
&=&(q-1)+\frac{(-1)^{\frac{k}{f}-1}}{q^{f}-1}\sum_{\psi_1\in \widehat{\Bbb F}_{q^{f}}^{*}}\sum_{y\in \Bbb F_{q}^{*}}G(\psi_1,\chi_1)^{\frac{k}{f}}\bar\psi_1(b^{\frac{q^{k}-1}{q^{f}-1}}y^{-1})\sum_{x\in \Bbb F_{q^{f}}^{*}}\bar\psi_1(yx)\chi_1(yx)\\
&=&(q-1)+\frac{(-1)^{\frac{k}{f}-1}}{q^{f}-1}\sum_{\psi_1\in \widehat{\Bbb F}_{q^{f}}^{*}}G(\psi_1,\chi_1)^{\frac{k}{f}}G(\bar\psi_1,\chi_1)\bar\psi_1(b^{\frac{q^{k}-1}{q^{f}-1}})\sum_{y\in \Bbb F_{q}^{*}}\psi_1(y).
\end{eqnarray*}
This implies that
\begin{eqnarray*}
\Delta(b)&=&\sum\limits_{z\in \Bbb F_{q}^{*}}\Omega(bz)\\
&=&(q-1)^{2}+\frac{(-1)^{\frac{k}{f}-1}}{q^{f}-1}\sum_{\psi_1\in \widehat{\Bbb F}_{q^{f}}^{*}}G(\psi_1,\chi_1)^{\frac{k}{f}}G(\bar\psi_1,\chi_1)\bar\psi_1(b^{\frac{q^{k}-1}{q^{f}-1}})\sum_{y\in \Bbb F_{q}^{*}}\psi_1(y)\sum_{z\in \Bbb F_{q}^{*}}\bar\psi_1(z^{\frac{k}{f}}).
\end{eqnarray*}
Since the norm function $N_{q^{f}/q}: \Bbb F_{q^{f}}^*\rightarrow \Bbb F_q^*$, $x\mapsto y=x^{\frac {q^{f}-1}{q-1}}$,  is an epimorphism, we have
$$\sum_{y\in \Bbb F_{q}^{*}}\psi_1(y)=\frac{q-1}{q^{f}-1}\sum_{x\in \Bbb F_{q^{f}}^{*}}\psi_1(x^{\frac{q^{f}-1}{q-1}})$$
and
$$\sum_{z\in \Bbb F_{q}^{*}}\bar\psi_1(z^{\frac{k}{f}})=\frac{q-1}{q^{f}-1}\sum_{x_1\in \Bbb F_{q^{f}}^{*}}\bar\psi_1(x_{1}^{\frac{q^{f}-1}{q-1}\cdot\frac{k}{f}}).$$
Then we have
\begin{eqnarray*}\Delta(b)&=&(q-1)^{2}+\frac{(-1)^{\frac{k}{f}-1}(q-1)^{2}}{(q^{f}-1)^{3}}\sum_{\psi_1\in \widehat{\Bbb F}_{q^{f}}^{*}}G(\psi_1,\chi_1)^{\frac{k}{f}}G(\bar\psi_1,\chi_1)\bar\psi_1(b^{\frac{q^{k}-1}{q^{f}-1}})\\
& &\sum_{x\in \Bbb F_{q^{f}}^{*}}\psi_1(x^{\frac{q^{f}-1}{q-1}})\sum_{x_{1}\in \Bbb F_{q^{f}}^{*}}\bar\psi_1(x_{1}^{\frac{q^{f}-1}{q-1}\cdot\frac{k}{f}}).\end{eqnarray*}
Note that

\begin{eqnarray*}\sum_{x\in \Bbb F_{q^{f}}^{*}}\psi_{1}(x^{\frac{q^{f}-1}{q-1}})=\left\{
\begin{array}{ll}
q^{f}-1, & \mbox{if}\ \ord(\psi_1)|\frac{q^{f}-1}{q-1},\\
0, & \mbox{otherwise},
\end{array} \right.\end{eqnarray*}
and
\begin{eqnarray*}\sum_{x_{1}\in \Bbb F_{q^{f}}^{*}}\bar\psi_{1}(x_{1}^{\frac{q^{f}-1}{q-1}\cdot\frac{k}{f}})=\left\{
\begin{array}{ll}
q^{f}-1, & \mbox{if}\ \ord(\psi_1)|\frac{q^{f}-1}{q-1}\cdot\frac{k}{f},\\
0, & \mbox{otherwise}.
\end{array} \right.\end{eqnarray*}
Hence, we have
\begin{eqnarray*}
\Delta(b)&=&(q-1)^{2}+\frac{(-1)^{\frac{k}{f}-1}(q-1)^{2}}{q^{f}-1}\sum_{j=0}^{\frac{q^{f}-1}{q-1}-1}
G(\varphi^{^{j}},\chi_1)^{\frac{k}{f}}G(\bar\varphi^{j},\chi_1)\bar\varphi^{j}(b^{\frac{q^{k}-1}{q^{f}-1}})\\
&=&\frac{q^{f}(q-1)^{2}}{q^{f}-1}+\frac{(-1)^{\frac{k}{f}-1}(q-1)^{2}}{q^{f}-1}\sum_{j=1}^{\frac{q^{f}-1}{q-1}-1}
G(\varphi^{^{j}},\chi_1)^{\frac{k}{f}}G(\bar\varphi^{j},\chi_1)\bar\varphi^{j}(b^{\frac{q^{k}-1}{q^{f}-1}})\\
&=&\frac{q^{f}(q-1)^{2}}{q^{f}-1}+\frac{(-1)^{\frac{k}{f}-1}(q-1)^{2}}{q^{f}-1}\sum_{j=1}^{\frac{q^{f}-1}{q-1}-1}
G(\varphi^{^{j}},\chi_1)^{\frac{k}{f}}\varphi^{j}(-1)\overline{G(\varphi^{^{j}},\chi_1)}\bar\varphi^{j}(b^{\frac{q^{k}-1}{q^{f}-1}})\\
&=&\frac{q^{f}(q-1)^{2}}{q^{f}-1}+\frac{(-1)^{\frac{k}{f}-1}q^{f}(q-1)^{2}}{q^{f}-1}\sum_{j=1}^{\frac{q^{f}-1}{q-1}-1}
\varphi^{j}(-1)G(\varphi^{^{j}},\chi_1)^{\frac{k}{f}-1}\bar\varphi^{j}(b^{\frac{q^{k}-1}{q^{f}-1}}),
\end{eqnarray*}
where $\varphi$ is a multiplicative character of order $\frac{q^{f}-1}{q-1}$ of $\Bbb F_{q^{f}}$.
\end{proof}

In the following, we use Gauss sums to represent the Hamming weights of $\mathcal{C}_{D}$.

\begin{thm}
Let $f|k$ and $k>f>1$. Let $\mathcal{C}_{D}$ be the linear code defined as in Equation (1) with $a=0$. Then for a codeword $\textbf{c}_{b}=(\Tr_{q^{k}/q}(bd_{1}),\cdots,\Tr_{q^{k}/q}(bd_{1}))\in \mathcal{C}_{D}$, $b\in \Bbb F_{q^{k}}^{*}$, we have
$$w(\textbf{c}_{b})=\frac{(q-1)q^{k-2}(q^{f}-q)}{q^{f}-1}-\frac{(-1)^{\frac{k}{f}-1}q^{f}(q-1)^{2}}{q^{2}(q^{f}-1)}\sum_{j=1}^{\frac{q^{f}-1}{q-1}-1}
\varphi^{j}(-1)G(\varphi^{^{j}},\chi_1)^{\frac{k}{f}-1}\bar\varphi^{j}(b^{\frac{q^{k}-1}{q^{f}-1}}),$$
where $\varphi$ is a multiplicative character of order $\frac{q^{f}-1}{q-1}$ over $\Bbb F_{q^{f}}$. And $\mathcal{C}_{D}$ is a $$[\frac{(q^{k}-1)(q^{f}-q)}{q(q^{f}-1)},k,d\geq \frac{(q-1)(q^{f}-q)(q^{k-2}-q^{\frac{k+f-4}{2}})}{q^{f}-1}]$$
linear code.
\end{thm}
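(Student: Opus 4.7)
The plan is to exploit the identity $w(\mathbf{c}_b) = n_0 - N_b$ and then plug in the explicit formulas already computed. Specifically, $N_b$ counts all $x \in \Bbb F_{q^{k}}$ (including $x = 0$) that satisfy both $\Tr_{q^{f}/q}(x^{(q^{k}-1)/(q^{f}-1)}) = 0$ and $\Tr_{q^{k}/q}(bx) = 0$. The point $x = 0$ always contributes to $N_b$ but is not part of $D$, while every $d_i \in D$ with $\Tr_{q^{k}/q}(b d_i) = 0$ contributes a zero coordinate to $\mathbf{c}_b$. Hence $w(\mathbf{c}_b) = n - (N_b - 1) = n_0 - N_b$.

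Next, I would substitute Equation (3) for $N_b$ and Lemma 5 for $\Delta(b)$, then clear denominators. Writing everything over the common denominator $q^{2}(q^{f}-1)$, the key algebraic observation is that the constant term $\frac{q^{f}(q-1)^{2}}{q^{f}-1}$ of $\Delta(b)$ combines with $n_{0} - q^{k-2} - \frac{(q-1)(q^{f}-q^{k})}{q^{2}(q^{f}-1)}$ to produce the clean main term $\frac{(q-1)q^{k-2}(q^{f}-q)}{q^{f}-1}$. The oscillatory remainder of $\Delta(b)$, carried over with the factor $-\frac{1}{q^{2}}$, becomes exactly the Gauss-sum character expression displayed in the statement.

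For the minimum distance estimate, I would apply the triangle inequality to the oscillatory sum. For $1 \leq j \leq \frac{q^{f}-1}{q-1}-1$ the character $\varphi^{j}$ is nontrivial (a nonempty range because $f > 1$), so $|G(\varphi^{j},\chi_{1})| = q^{f/2}$ and hence $|G(\varphi^{j},\chi_{1})|^{k/f-1} = q^{(k-f)/2}$. Since there are $\frac{q^{f}-q}{q-1}$ nontrivial terms and $|\varphi^{j}(-1)\bar\varphi^{j}(b^{(q^{k}-1)/(q^{f}-1)})| = 1$, the oscillatory piece is bounded in modulus by $\frac{q^{f}(q-1)^{2}}{q^{2}(q^{f}-1)} \cdot \frac{q^{f}-q}{q-1} \cdot q^{(k-f)/2}$, which simplifies to $\frac{(q-1)(q^{f}-q)}{q^{f}-1}\,q^{(k+f-4)/2}$. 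Subtracting this from the main term yields the stated lower bound on $d$. Because $k > f$ forces this bound to be strictly positive, the $\Bbb F_{q}$-linear map $b \mapsto \mathbf{c}_{b}$ has trivial kernel, so $\dim_{\Bbb F_{q}}\mathcal{C}_{D} = k$.

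The main obstacle is purely bookkeeping in the simplification of step two; keeping the common denominator $q^{2}(q^{f}-1)$ throughout and canceling the two copies of $q^{f}(q-1)^{2}$ is the delicate point. No genuinely new estimate is required beyond the standard bound $|G(\psi,\chi)| = \sqrt{r}$ for nontrivial $\psi$ and Lemma 5.
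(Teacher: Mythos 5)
Your proposal is correct and follows exactly the paper's route: writing $w(\textbf{c}_{b})=n_{0}-N_{b}$ (your accounting for the point $x=0$ just makes explicit what the paper asserts), substituting Equations (2), (3) and Lemma 5 for the weight formula, and bounding the oscillatory sum via $|G(\varphi^{j},\chi_{1})|=\sqrt{q^{f}}$ over the $\frac{q^{f}-q}{q-1}$ nontrivial characters, with strict positivity (from $k>f$) yielding injectivity of $b\mapsto \textbf{c}_{b}$ and hence dimension $k$. Your simplification of the bound to $\frac{(q-1)(q^{f}-q)}{q^{f}-1}q^{\frac{k+f-4}{2}}$ checks out, so nothing is missing.
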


\begin{proof}
For a codeword $\textbf{c}_{b}=(\Tr_{q^{k}/p}(bd_{1}),\cdots,\Tr_{q^{k}/p}(bd_{n}))$, $b\in \Bbb F_{q^{k}}^{*}$, the Hamming weight of it equals $n_{0}-N_{b}$. Then by Equations (2) and (3), Lemma 5, we have
$$w(\textbf{c}_{b})=\frac{(q-1)q^{k-2}(q^{f}-q)}{q^{f}-1}-\frac{(-1)^{\frac{k}{f}-1}q^{f}(q-1)^{2}}{q^{2}(q^{f}-1)}\sum_{j=1}^{\frac{q^{f}-1}{q-1}-1}
\varphi^{j}(-1)G(\varphi^{^{j}},\chi_1)^{\frac{k}{f}-1}\bar\varphi^{j}(b^{\frac{q^{k}-1}{q^{f}-1}}),$$
where $\varphi$ is a multiplicative character of order $\frac{q^{f}-1}{q-1}$ over $\Bbb F_{q^{f}}$. For $1\leq j \leq\frac{q^{f}-1}{q-1}$, we have $|G(\varphi^{^{j}},\chi_1)|=\sqrt{q^{f}}$. Hence,
\begin{eqnarray*}
& &|\frac{(-1)^{\frac{k}{f}-1}q^{f}(q-1)^{2}}{q^{2}(q^{f}-1)}\sum_{j=1}^{\frac{q^{f}-1}{q-1}-1}
\varphi^{j}(-1)G(\varphi^{^{j}},\chi_1)^{\frac{k}{f}-1}\bar\varphi^{j}(b^{\frac{q^{k}-1}{q^{f}-1}})|\\
&\leq & \frac{q^{f}(q-1)^{2}}{q^{2}(q^{f}-1)}(\frac{q^{f}-1}{q-1}-1)(\sqrt{q^{f}})^{\frac{k}{f}-1}
\end{eqnarray*}
Then we have
$$w(\textbf{c}_{b})\geq \frac{(q-1)(q^{f}-q)(q^{k-2}-q^{\frac{k+f-4}{2}})}{q^{f}-1}>0$$
due to $k>f>1$. This implies that the dimension of $\mathcal{C}_{D}$ is $k$.
\end{proof}

\begin{rem}
It is observed that the weights of $\mathcal{C}_{D}$ in Theorem 1 have a common divisor $q-1$. This indicates that the code $\mathcal{C}_{D}$ can be punctured into a shorter code $\mathcal{C}_{\widetilde{D}}$ as follows.

Note that $\Tr_{q^{f}/q}((cx)^{\frac{q^{k}-1}{q^{f}-1}})=c^{\frac{k}{f}}\Tr_{q^{f}/q}(x^{\frac{q^{k}-1}{q-1}})=0$ for all $c\in \Bbb F_{q}^{*}$ if $\Tr_{q^{f}/q}(x^{\frac{q^{k}-1}{q^{f}-1}})=0$. Hence, the defining set of $\mathcal{C}_{D}$ can be expressed as
\begin{eqnarray}D=\Bbb F_{q}^{*}\widetilde{D}=\{c\widetilde{d}:c\in \Bbb F_{q}^{*},\widetilde{d}\in \widetilde{D}\},\end{eqnarray}
where $\widetilde{d}_{i}/\widetilde{d}_{j}\not\in\Bbb F_{q}^{*}$ for every pair of distinct elements $\widetilde{d}_{i},\widetilde{d}_{j}$ in $\widetilde{D}$. And we obtain a new code $\mathcal{C}_{\widetilde{D}}$ with parameters
$$[\frac{(q^{k}-1)(q^{f-1}-1)}{(q^{f}-1)(q-1)},k,d'\geq \frac{(q^{f}-q)(q^{k-2}-q^{\frac{k+f-4}{2}})}{q^{f}-1}]$$ which may have better performance, where $f|k$, $k>f>1$ and $d'$ denotes the minimum Hamming distance of $\mathcal{C}_{\widetilde{D}}$.
\end{rem}

If the Gauss sums of order $\frac{q^{f}-1}{q-1}$ are known, then we can obtain the weight distribution of $\mathcal{C}_{D}$  by Theorem 1. However, Gauss sums are known for only a few cases. For $q=p=2,f=3$, $\mathcal{C}_{D}$ is a linear code with at most three weights and its weight distribution was given in \cite{HY} using the Gauss sums in the index 2 case. Now we consider the case $f=2$ which implies $\frac{q^{f}-1}{q-1}=q+1$. Since the Gauss sums in the semi-primitive case are known from Lemma 1, we can evaluate the weight distribution of $\mathcal{C}_{D}$  by Theorem 1. Nevertheless,   $\mathcal{C}_{D}$ is equivalent to the following irreducible cyclic code $\mathcal{C}$ defined by
$$\mathcal{C}=\{(\Tr_{q^{k}/q}(x\alpha^{(q+1)\cdot0}),\Tr_{q^{k}/q}(x\alpha^{(q+1)\cdot1}),\cdots,\Tr_{q^{k}/q}(x\alpha^{(q+1)\cdot(\frac{q^{k}-1}{q+1}-1)})):
x\in\Bbb F_{q^{k}}\}.$$
This result is hinted by a reviewer. The weight distribution of $\mathcal{C}$ can be found in \cite[Theorem 23]{DY}.

\begin{prop}
Let $a=0$, $f|k$ and $f=2$. Let $\mathcal{C}_{D}$ be the linear code defined as in Equation (1). Then $\mathcal{C}_{D}$ is equivalent to the cyclic code $\mathcal{C}$. Furthermore, if $f=2$ and $k\equiv 0\pmod{4}$, $\mathcal{C}_{D}$ is a $[\frac{q^{k}-1}{q+1},k,\frac{(q-1)(q^{k-1}-q^{\frac{k}{2}-1})}{q+1}]$ two-weight code with the weight enumerator
$$1+\frac{q(q^{k}-1)}{q+1}z^{\frac{(q-1)(q^{k-1}-q^{\frac{k}{2}-1})}{q+1}}+\frac{q^{k}-1}{q+1}z^{\frac{(q-1)(q^{k-1}+q^{\frac{k}{2}})}{q+1}};$$
if $f=2$ and $k\equiv 2\pmod{4}$, $\mathcal{C}_{D}$ is a $[\frac{q^{k}-1}{q+1},k,\frac{(q-1)(q^{k-1}-q^{\frac{k}{2}})}{q+1}]$ two-weight code with the weight enumerator
$$1+\frac{q^{k}-1}{q+1}z^{\frac{(q-1)(q^{k-1}-q^{\frac{k}{2}})}{q+1}}+\frac{q(q^{k}-1)}{q+1}z^{\frac{(q-1)(q^{k-1}+q^{\frac{k}{2}-1})}{q+1}}.$$
\end{prop}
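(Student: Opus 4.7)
The plan is to deduce Proposition~1 in two steps. First, I would establish the equivalence $\mathcal{C}_{D}\sim\mathcal{C}$ between our code and the irreducible cyclic code $\mathcal{C}$ displayed just before the proposition. Second, I would invoke \cite[Theorem~23]{DY}, where the weight distribution of $\mathcal{C}$ is computed, and read off the two cases $k\equiv 0\pmod 4$ and $k\equiv 2\pmod 4$.

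The core of the argument is the equivalence. To prove it, I would show that the defining set $D$ is a single multiplicative coset of the subgroup $H:=\langle\alpha^{q+1}\rangle$ of $(q+1)$-th powers in $\Bbb F_{q^{k}}^{*}$, which is exactly the defining set of $\mathcal{C}$. The $\Bbb F_{q}$-linear trace $\Tr_{q^{2}/q}:\Bbb F_{q^{2}}\to\Bbb F_{q}$ is surjective, so its kernel $K$ in $\Bbb F_{q^{2}}^{*}$ has size $q-1$; being stable under $\Bbb F_{q}^{*}$-multiplication, $K=\Bbb F_{q}^{*}\cdot z_{0}$ for any $z_{0}\in K$. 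Explicit representatives are $z_{0}=\beta^{(q+1)/2}$ for odd $q$ (using $\Tr_{q^{2}/q}(\beta^{(q+1)/2})=\beta^{(q+1)/2}(1+\beta^{(q^{2}-1)/2})=0$, since $\beta^{(q^{2}-1)/2}=-1$) and $z_{0}=1$ for even $q$ (since $\Tr_{q^{2}/q}(c)=2c$ vanishes on $\Bbb F_{q}$). Now the condition $\Tr_{q^{2}/q}(x^{(q^{k}-1)/(q^{2}-1)})=0$ reads $N_{q^{k}/q^{2}}(x)\in K$, so picking $\delta\in\Bbb F_{q^{k}}^{*}$ with $N_{q^{k}/q^{2}}(\delta)=z_{0}$ (concretely $\delta=\alpha^{(q+1)/2}$ for odd $q$, $\delta=1$ otherwise) gives
\[ x\in D \iff N_{q^{k}/q^{2}}(x\delta^{-1})\in\Bbb F_{q}^{*} \iff (x\delta^{-1})^{(q^{k}-1)/(q+1)}=1 \iff x\delta^{-1}\in H. \]
Therefore $D=\delta H$, and the substitution $x\mapsto x\delta^{-1}$ on $\Bbb F_{q^{k}}$ sends codewords of $\mathcal{C}_{D}$ to codewords of $\mathcal{C}$ (after a permutation of coordinates), establishing the equivalence.

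Once the equivalence holds, the weight distribution of $\mathcal{C}$ from \cite[Theorem~23]{DY} yields the claim. The Gauss sums of order $q+1$ over $\Bbb F_{q^{k}}$ that appear there are, by Davenport--Hasse (Lemma~3), $(k/2)$-th powers (up to sign) of Gauss sums of order $q+1$ over $\Bbb F_{q^{2}}$, which is the semi-primitive case $p^{s}=q\equiv -1\pmod{q+1}$, $\gamma=1$ of Lemma~1, and these evaluate to $\pm q$. The dichotomy $k\equiv 0$ versus $k\equiv 2\pmod 4$ is then simply the parity of $k/2$ in the exponent, and the multiplicities $\tfrac{q(q^{k}-1)}{q+1}$ and $\tfrac{q^{k}-1}{q+1}$ arise from counting the number of $b\in\Bbb F_{q^{k}}^{*}$ according to whether $b^{(q^{k}-1)/(q+1)}$ equals $1$ or not. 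I expect the main obstacle to be the explicit coset identification $D=\delta H$, in particular verifying that $N_{q^{k}/q^{2}}(\alpha^{(q+1)/2})=\beta^{(q+1)/2}$ and handling the even-$q$ case separately; downstream everything is routine specialization of the cited cyclic-code formula.
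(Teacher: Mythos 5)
Your proposal is correct and takes essentially the same route as the paper's proof: both reduce the defining condition to the coset identification $D=\langle\alpha^{q+1}\rangle$ for even $q$ and $D=\alpha^{(q+1)/2}\langle\alpha^{q+1}\rangle$ for odd $q$ (the paper solves $x^{\frac{q^{k}-1}{q+1}}=-1$ directly, while you phrase the same computation through $\ker(\Tr_{q^{2}/q})\cap\Bbb F_{q^{2}}^{*}=\Bbb F_{q}^{*}\beta^{(q+1)/2}$ and surjectivity of the norm), and then both quote the weight distribution from \cite[Theorem 23]{DY}. Your additional sketch of the semi-primitive Gauss sum evaluation is consistent but not needed, since the paper likewise delegates that entirely to the citation.
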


\begin{proof}
We only prove that $\mathcal{C}_{D}$ is equivalent to the cyclic code $\mathcal{C}$ and the weight distributions can be obtained from \cite[Theorem 23]{DY}.

Let $f=2$, then $D=\{x\in \Bbb F_{q^{k}}^{*}:\Tr_{q^{2}/q}(x^{\frac{q^{k}-1}{q^{f}-1}})=0\}$. For $x\in \Bbb F_{q^{k}}^{*}$, we have
$$\Tr_{q^{2}/q}(x^{\frac{q^{k}-1}{q^{f}-1}})=0\Longleftrightarrow x^{\frac{q^{k}-1}{q^{f}-1}}(1+x^{\frac{q^{k}-1}{q^{f}-1}(q-1)})=0\Longleftrightarrow
x^{\frac{q^{k}-1}{q+1}}=-1.$$ If $q$ is even, then $x^{\frac{q^{k}-1}{q+1}}=1$ which implies that $D=\langle\alpha^{q+1}\rangle$. Hence, $\mathcal{C}_{D}$ is equivalent to the cyclic code $\mathcal{C}$. If $q$ is odd, then
$$x^{\frac{q^{k}-1}{q+1}}=-1\Longleftrightarrow \alpha^{\frac{q^{k}-1}{q+1}t}=\alpha^{\frac{q^{k}-1}{2}}\Longleftrightarrow t\equiv \frac{q+1}{2}\pmod{q+1},$$
where $x=\alpha^{t},0\leq t \leq q^{k}-2$. This implies that $D=\alpha^{\frac{q+1}{2}}\langle\alpha^{q+1}\rangle$.  By the definition of $\mathcal{C}_{D}$,   $\alpha^{\frac{q+1}{2}}x$ runs through $\Bbb F_{q^{k}}$ when $x$ runs through $\Bbb F_{q^{k}}$. Hence  $\mathcal{C}_{D}$ is equivalent to the cyclic code $\mathcal{C}$.
\end{proof}

\begin{rem}
From Remark 1 and Proposition 1, we can obtain the weight distribution of $\mathcal{C}_{\widetilde{D}}$ if $f=2$. If $f=2$ and $k\equiv 0\pmod{4}$, $\mathcal{C}_{\widetilde{D}}$ is a $[\frac{q^{k}-1}{q^{2}-1},k,\frac{q^{k-1}-q^{\frac{k}{2}-1}}{q+1}]$ two-weight code with the weight enumerator
$$1+\frac{q(q^{k}-1)}{q+1}z^{\frac{q^{k-1}-q^{\frac{k}{2}-1}}{q+1}}+\frac{q^{k}-1}{q+1}z^{\frac{q^{k-1}+q^{\frac{k}{2}}}{q+1}};$$
if $f=2$ and $k\equiv 2\pmod{4}$, $\mathcal{C}_{\widetilde{D}}$ is a $[\frac{q^{k}-1}{q^{2}-1},k,\frac{q^{k-1}-q^{\frac{k}{2}}}{q+1}]$ two-weight code with the weight enumerator
$$1+\frac{q^{k}-1}{q+1}z^{\frac{q^{k-1}-q^{\frac{k}{2}}}{q+1}}+\frac{q(q^{k}-1)}{q+1}z^{\frac{q^{k-1}+q^{\frac{k}{2}-1}}{q+1}}.$$
In particular, if $f=2,k=4$, $\mathcal{C}_{\widetilde{D}}$ is an optimal $[q^{2}+1,4,q^{2}-q]$ linear code achieving the Griesmer bound.
\end{rem}

\begin{exa}
Let $f=2$, $k=4$ and $q=4$,  $\mathcal{C}_{D}$ in Theorem 1 is a $[51,4,36]$ linear code, which has the same parameters as the best known linear codes according to \cite{G}, with weight enumerator
$1+204z^{36}+51z^{48}$. This can be verified by a Magma program.
\end{exa}

\begin{exa}
Let $f=2$, $k=6$ and $q=3$,  $\mathcal{C}_{D}$ in Theorem 1 is a $[182,6,108]$ linear code with weight enumerator
$1+182z^{108}+546z^{126}$. This can be verified by a Magma program.
\end{exa}

\section{The case $a\in \Bbb F_{q}^{*}$}
In this section, we assume that $f$ is a positive integer such that $f|k$ and $\gcd(\frac{k}{f},q-1)=1$. Let other notations be the same as those of Section 3. Now we investigate the linear code defined as in Equation (1) with the defining set
\begin{eqnarray*}D=\{x\in \Bbb F_{q^{k}}:\Tr_{q^{f}/q}(x^{\frac{q^{k}-1}{q^{f}-1}})+a=0\},\end{eqnarray*}
where $a\in \Bbb F_{q}^{*}$.

For $a\in \Bbb F_q^*$, the length of $\mathcal{C}_{D}$ equals \begin{eqnarray}n=|D|=|\ker(N_{q^{k}/q^{f}})|\cdot|\ker(\Tr_{q^{f}/q})|=\frac{q^{f-1}(q^{k}-1)}{q^{f}-1}.\end{eqnarray}
For each $b\in \Bbb F_{q^{k}}^{*}$, let
$N_{b}=|\{x\in \Bbb F_{q^{k}}:\Tr_{q^{f}/q}(x^{\frac{q^{k}-1}{q^{f}-1}})+a=0\mbox{ and } \Tr_{q^{k}/q}(bx)=0\}|$.
By the basic facts of additive characters, for any $b\in \Bbb F_{q^{k}}^{*}$ we have
\begin{eqnarray*}
N_{b}&=&\frac{1}{q^{2}}\sum_{x\in \Bbb F_{q^{k}}}(\sum_{y\in \Bbb F_{q}}\chi(y(\Tr_{q^{f}/q}(x^{\frac{q^{k}-1}{q^{f}-1}})+a)))(\sum_{z\in \Bbb F_{q}}\chi(\Tr_{q^{k}/q}(bzx))\\
&=&\frac{1}{q^{2}}\sum_{x\in \Bbb F_{q^{k}}}(\sum_{y\in \Bbb F_{q}}\chi(ay)\chi_{1}(yx^{\frac{q^{k}-1}{q^{f}-1}}))(\sum_{z\in \Bbb F_{q}}\chi_{2}(bzx))\\
&=&q^{k-2}+\frac{1}{q^{2}}\sum_{x\in \Bbb F_{q^{k}}}(\sum_{y\in \Bbb F_{q}^{*}}\chi(ay)\chi_{1}(yx^{\frac{q^{k}-1}{q^{f}-1}}))+\frac{1}{q^{2}}\sum_{x\in \Bbb F_{q^{k}}}(\sum_{z\in \Bbb F_{q}^{*}}\chi_{2}(bzx))\\
& &+\frac{1}{q^{2}}\sum_{x\in \Bbb F_{q^{k}}}\sum_{y\in \Bbb F_{q}^{*}}\sum_{z\in \Bbb F_{q}^{*}}\chi(ay)\chi_{1}(yx^{\frac{q^{k}-1}{q^{f}-1}})\chi_{2}(bzx)\\
&=&q^{k-2}+\frac{1}{q^{2}}(qn-q^{k})+\frac{1}{q^{2}}\sum_{x\in \Bbb F_{q^{k}}}\sum_{y\in \Bbb F_{q}^{*}}\sum_{z\in \Bbb F_{q}^{*}}\chi(ay)\chi_{1}(yx^{\frac{q^{k}-1}{q^{f}-1}})\chi_{2}(bzx)\\
&=&q^{k-2}+\frac{q^{k}-q^{f}}{q^{2}(q^{f}-1)}+\frac{1}{q^{2}}\sum_{x\in \Bbb F_{q^{k}}}\sum_{y\in \Bbb F_{q}^{*}}\sum_{z\in \Bbb F_{q}^{*}}\chi(ay)\chi_{1}(yx^{\frac{q^{k}-1}{q^{f}-1}})\chi_{2}(bzx).
\end{eqnarray*}
Let $$\Lambda(b):=\sum_{x\in \Bbb F_{q^{k}}}\sum_{y\in \Bbb F_{q}^{*}}\sum_{z\in \Bbb F_{q}^{*}}\chi(ay)\chi_{1}(yx^{\frac{q^{k}-1}{q^{f}-1}})\chi_{2}(bzx).$$ Then we have
\begin{eqnarray}
N_{b}=q^{k-2}+\frac{q^{k}-q^{f}}{q^{2}(q^{f}-1)}+\frac{1}{q^{2}}\Lambda(b).
\end{eqnarray}
In the following, we use Gauss sums to express the exponential sum $\Lambda(b),b\in \Bbb F_{q^{k}}^{*}$.

\begin{lem}
Let $b\in \Bbb F_{q^{k}}^{*}$, $f|k$ and $\gcd(\frac{k}{f},q-1)=1$. If $f=1$, we have $\Lambda(b)=-q$; if $f>1$, then
\begin{eqnarray*}
\Lambda(b)=\frac{q^{f}(1-q)}{q^{f}-1}-\frac{(-1)^{\frac{k}{f}-1}(q-1)q^{f}}{q^{f}-1}\sum_{j=1}^{\frac{q^{f}-1}{q-1}-1}
\varphi^j(-1)G(\varphi^j,\chi_1)^{\frac{k}{f}-1}\bar\varphi^j(b^{\frac{q^{k}-1}{q^{f}-1}}),
\end{eqnarray*}
where $\varphi$ is a multiplicative character of order $\frac{q^{f}-1}{q-1}$ of $\Bbb F_{q^{f}}$.
\end{lem}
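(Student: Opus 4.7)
The plan is to mirror the argument of Lemma 5, adding the extra character value $\chi(ay)$ coming from $a\in\Bbb F_q^*$ and tracking how it interacts with the inner Gauss sums. First I would factor out the sum on $z$ and work with $\Lambda_0(b)=\sum_{x\in\Bbb F_{q^{k}}}\sum_{y\in\Bbb F_q^*}\chi(ay)\chi_1(yx^{(q^k-1)/(q^f-1)})\chi_2(bx)$, so that $\Lambda(b)=\sum_{z\in\Bbb F_q^*}\Lambda_0(bz)$. Splitting off the $x=0$ term contributes $\sum_{y\in\Bbb F_q^*}\chi(ay)=-1$. For the $x\neq 0$ part I apply the coset decomposition $\Bbb F_{q^k}^*=\bigcup_{j=0}^{q^f-2}\alpha^j\langle\alpha^{q^f-1}\rangle$ as in Lemma 5 and invoke Lemma 4 on the inner orbit sum to introduce Gauss sums over $\Bbb F_{q^f}$.

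Next, I would swap the orders of summation so that $\psi_1\in\widehat{\Bbb F}_{q^f}^{*}$ sits outside, and then perform the change of variable $u=y\beta^j$ inside the $j$-sum. This identifies an inner factor $\sum_{u\in\Bbb F_{q^f}^*}\bar\psi_1(u)\chi_1(u)\cdot\psi_1(y)$, which produces $\psi_1(y)G(\bar\psi_1,\chi_1)$. The net result should be
\begin{equation*}
\Lambda_0(b)=-1+\frac{(-1)^{k/f-1}}{q^f-1}\sum_{\psi_1\in\widehat{\Bbb F}_{q^f}^{*}}G(\psi_1,\chi_1)^{k/f}G(\bar\psi_1,\chi_1)\bar\psi_1(b^{(q^k-1)/(q^f-1)})\sum_{y\in\Bbb F_q^*}\chi(ay)\psi_1(y).
\end{equation*}
Summing on $z$, the extra factor coming from $\bar\psi_1((bz)^{(q^k-1)/(q^f-1)})$ simplifies, because for $z\in\Bbb F_q^*$ we have $z^{(q^k-1)/(q^f-1)}=z^{k/f}$, producing $\sum_{z\in\Bbb F_q^*}\bar\psi_1(z^{k/f})$.

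This is the crucial step and the point where the hypothesis $\gcd(k/f,q-1)=1$ is used: the map $z\mapsto z^{k/f}$ is a bijection of $\Bbb F_q^*$, so $\sum_z\bar\psi_1(z^{k/f})=\sum_z\bar\psi_1(z)$, which vanishes unless $\psi_1|_{\Bbb F_q^*}$ is trivial. The characters trivial on $\Bbb F_q^*$ form the subgroup $\{\varphi^j:0\le j\le(q^f-1)/(q-1)-1\}$ with $\varphi$ of order $(q^f-1)/(q-1)$, and for these characters $\sum_{y\in\Bbb F_q^*}\chi(ay)\psi_1(y)=\sum_y\chi(ay)=-1$ since $a\in\Bbb F_q^*$. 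Consequently $\Lambda(b)$ reduces to a sum over $j=0,1,\dots,(q^f-1)/(q-1)-1$.

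Finally, I separate the $j=0$ term (using $G(\psi_0,\chi_1)=-1$) and combine it with the $-(q-1)$ arising from the $z$-sum on the constant $-1$; a short cancellation yields $-(q-1)q^f/(q^f-1)=q^f(1-q)/(q^f-1)$. For the remaining $j\ge 1$ terms I use $G(\bar\varphi^j,\chi_1)=\varphi^j(-1)\overline{G(\varphi^j,\chi_1)}$ together with $|G(\varphi^j,\chi_1)|^2=q^f$ to convert $G(\varphi^j,\chi_1)^{k/f}G(\bar\varphi^j,\chi_1)$ into $\varphi^j(-1)q^f G(\varphi^j,\chi_1)^{k/f-1}$, giving the stated formula for $f>1$. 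When $f=1$ the index range $1\le j\le(q^f-1)/(q-1)-1$ is empty and the answer collapses to $\Lambda(b)=q^f(1-q)/(q^f-1)=-q$. The main obstacle I expect is the bookkeeping around the $j=0$ contribution and the sign matching from $(-1)^{k/f-1}(-1)^{k/f}=-1$; provided one keeps careful track of these, the rest is mechanical character-sum manipulation.
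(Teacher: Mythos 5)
Your proposal is correct and follows essentially the same path as the paper's proof: you redo the Lemma 5 computation with the extra factor $\chi(ay)$, apply Lemma 4, use $\gcd(\frac{k}{f},q-1)=1$ to reduce $\sum_{z\in\Bbb F_q^*}\bar\psi_1(z^{k/f})$ to orthogonality so that only the characters $\varphi^j$ trivial on $\Bbb F_q^*$ survive, evaluate $\sum_{y\in\Bbb F_q^*}\chi(ay)\varphi^j(y)=-1$, and then split off the $j=0$ term and apply $G(\bar\varphi^j,\chi_1)=\varphi^j(-1)\overline{G(\varphi^j,\chi_1)}$ with $|G(\varphi^j,\chi_1)|^2=q^f$, exactly as the paper does. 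The only cosmetic difference is that you handle the two inner sums over $\Bbb F_q^*$ directly (bijectivity of $z\mapsto z^{k/f}$ and triviality of $\varphi^j$ on $\Bbb F_q^*$), whereas the paper lifts them to $\Bbb F_{q^f}^*$ via the norm epimorphism; the intermediate expressions and final bookkeeping coincide.
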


\begin{proof}
Using the method  to compute  the exponential sum $\Delta(b)$ in  Lemma 5, we can similarly obtain that
\begin{eqnarray*}
\Lambda(b)&=&(1-q)+\frac{(-1)^{\frac{k}{f}-1}}{q^{f}-1}\sum_{\psi_1\in \widehat{\Bbb F}_{q^{f}}^{*}}G(\psi_1,\chi_1)^{\frac{k}{f}}G(\bar\psi_1,\chi_1)\bar\psi_1(b^{\frac{q^{k}-1}{q^{f}-1}})
\sum_{y\in \Bbb F_{q}^{*}}\chi(ay)\psi_1(y)\sum_{z\in \Bbb F_{q}^{*}}\bar\psi_1(z^{\frac{k}{f}}).
\end{eqnarray*}
Since the norm function $N_{q^{f}/q}: \Bbb F_{q^{f}}^*\rightarrow \Bbb F_q^*$, $x\mapsto z=x^{\frac {q^{f}-1}{q-1}}$,  is an epimorphism and $\gcd(\frac{k}{f},q-1)=1$, we have
\begin{eqnarray*}\sum_{z\in \Bbb F_{q}^{*}}\bar\psi_1(z^{\frac{k}{f}})
=\sum_{z\in \Bbb F_{q}^{*}}\bar\psi_1(z)=\frac{q-1}{q^{f}-1}\sum_{x\in \Bbb F_{q^{f}}^{*}}\bar\psi_1(x^{\frac{q^{f}-1}{q-1}}).
\end{eqnarray*}
Note that
\begin{eqnarray*}\sum_{x\in \Bbb F_{q^{f}}^{*}}\bar\psi_{1}(x^{\frac{q^{f}-1}{q-1}})=\left\{
\begin{array}{ll}
q^{f}-1, & \mbox{if}\ \ord(\psi_1)|\frac{q^{f}-1}{q-1},\\
0, & \mbox{otherwise}.
\end{array} \right.\end{eqnarray*}
It is easy to deduce that
$$\sum_{y\in \Bbb F_{q}^{*}}\chi(ay)\psi_1(y)=\frac{q-1}{q^{f}-1}\sum_{x_{1}\in \Bbb F_{q^{f}}^{*}}\chi(ax_{1}^{\frac{q^{f}-1}{q-1}})\psi_1(x_{1}^{\frac{q^{f}-1}{q-1}})$$ and
$$\sum_{x_{1}\in \Bbb F_{q^{f}}^{*}}\chi(ax_{1}^{\frac{q^{f}-1}{q-1}})=-\frac{q^{f}-1}{q-1}.$$
Hence, we have
\begin{eqnarray*}
\Lambda(b)&=&(1-q)+\frac{(-1)^{\frac{k}{f}-1}(q-1)^{2}}{(q^{f}-1)^{2}}\sum_{j=0}^{\frac{q^{f}-1}{q-1}-1}
G(\varphi^j,\chi_1)^{\frac{k}{f}}G(\bar\varphi^j,\chi_1)\bar\varphi^j(b^{\frac{q^{k}-1}{q^{f}-1}})\sum_{x_{1}\in \Bbb F_{q^{f}}^{*}}\chi(ax_{1}^{\frac{q^{f}-1}{q-1}})\\
&=&(1-q)-\frac{(-1)^{\frac{k}{f}-1}(q-1)}{q^{f}-1}\sum_{j=0}^{\frac{q^{f}-1}{q-1}-1}
G(\varphi^j,\chi_1)^{\frac{k}{f}}G(\bar\varphi^j,\chi_1)\bar\varphi^j(b^{\frac{q^{k}-1}{q^{f}-1}}),
\end{eqnarray*}
where $\varphi$ is a multiplicative character of order $\frac{q^{f}-1}{q-1}$ of $\Bbb F_{q^{f}}$.
If $f=1$, we have $\Lambda(b)=\frac{q^{f}(1-q)}{q^{f}-1}=-q$. If $f>1$, we have
\begin{eqnarray*}
\Lambda(b)=\frac{q^{f}(1-q)}{q^{f}-1}-\frac{(-1)^{\frac{k}{f}-1}(q-1)q^{f}}{q^{f}-1}\sum_{j=1}^{\frac{q^{f}-1}{q-1}-1}
\varphi^j(-1)G(\varphi^j,\chi_1)^{\frac{k}{f}-1}\bar\varphi^j(b^{\frac{q^{k}-1}{q^{f}-1}}),
\end{eqnarray*}
where $\varphi$ is a multiplicative character of order $\frac{q^{f}-1}{q-1}$ of $\Bbb F_{q^{f}}$.
\end{proof}

In general, it is  very difficult to determine the value distribution of $\Lambda(b)$ by Lemma 6. However, we can give the  value distribution of $\Lambda(b)$ if $f=2$. We need the following lemma.

\begin{lem}
For an odd integer $q$, let $\zeta_{q+1}$ be the primitive $q+1$-th root of complex unity. Then for any integer $1\leq s\leq q$ and $s\neq \frac{q+1}{2}$, we have
$$\zeta_{q+1}^{s}+\zeta_{q+1}^{3s}+\zeta_{q+1}^{5s}+\cdots+\zeta_{q+1}^{qs}=0,$$ and
$$\zeta_{q+1}^{2s}+\zeta_{q+1}^{4s}+\zeta_{q+1}^{6s}+\cdots+\zeta_{q+1}^{(q-1)s}=-1.$$
\end{lem}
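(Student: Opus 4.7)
The plan is to compute both sums as closed-form geometric progressions, using one crucial observation to guarantee that the denominators that appear do not vanish.

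First I would record the elementary fact that $\sum_{i=0}^{q}\zeta_{q+1}^{is}=0$ whenever $\zeta_{q+1}^{s}\ne 1$, which is the case since $1\le s\le q$ forces $(q+1)\nmid s$. Consequently
$$\sum_{i=1}^{q}\zeta_{q+1}^{is}=-1,$$
and this already splits, according to the parity of the exponent $i$, into the two sums appearing in the statement. So it suffices to prove just one of the two identities, and the other follows automatically.

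Next I would verify the second identity (the even-exponent sum) directly. Set $r=\zeta_{q+1}^{2s}$. The crucial point is that the hypothesis $s\ne (q+1)/2$, combined with $1\le s\le q$ and $q$ odd, forces $(q+1)\nmid 2s$, hence $r\ne 1$. Then the geometric progression
$$\sum_{k=1}^{(q-1)/2}r^{k}=r\cdot\frac{r^{(q-1)/2}-1}{r-1}=\frac{\zeta_{q+1}^{(q+1)s}-\zeta_{q+1}^{2s}}{\zeta_{q+1}^{2s}-1}=\frac{1-\zeta_{q+1}^{2s}}{\zeta_{q+1}^{2s}-1}=-1,$$
using $\zeta_{q+1}^{q+1}=1$ in the numerator. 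This gives the second identity; subtracting it from $-1$ yields the first identity.

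As an alternative for the first identity, one can factor out $\zeta_{q+1}^{s}$ and write the odd-exponent sum as $\zeta_{q+1}^{s}\sum_{k=0}^{(q-1)/2}\zeta_{q+1}^{2ks}$, which by the same geometric series formula equals $\zeta_{q+1}^{s}\cdot\frac{1-\zeta_{q+1}^{(q+1)s}}{1-\zeta_{q+1}^{2s}}=0$, since the numerator vanishes and (again by $s\ne(q+1)/2$) the denominator does not. There is no real obstacle here; the only subtle point is checking that the exclusion $s\ne(q+1)/2$ is exactly what is needed to avoid the degenerate case $r=1$, and this is ensured by $q$ being odd so that $(q+1)/2$ is the unique value of $s\in\{1,\dots,q\}$ with $2s$ a multiple of $q+1$.
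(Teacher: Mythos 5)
Your proposal is correct and takes essentially the same approach as the paper: the paper's proof evaluates both sums directly as geometric progressions with ratio $\zeta_{q+1}^{2s}$, exactly as in your computations, with the hypothesis $s\neq\frac{q+1}{2}$ (together with $q$ odd) being what keeps the ratio different from $1$. Your only deviation --- deducing the odd-exponent identity from the even one via $\sum_{i=1}^{q}\zeta_{q+1}^{is}=-1$ --- is a minor streamlining, and your ``alternative'' direct computation of the first sum is precisely the paper's argument for it.
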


\begin{proof}
It is clear that $\{\zeta_{q+1}^{s},\zeta_{q+1}^{3s},\zeta_{q+1}^{5s},\cdots,\zeta_{q+1}^{qs}\}$ is a geometric progression. Hence
\begin{eqnarray*}
\zeta_{q+1}^{s}+\zeta_{q+1}^{3s}+\zeta_{q+1}^{5s}+\cdots+\zeta_{q+1}^{qs}=\frac{\zeta_{q+1}^{s}(1-\zeta_{q+1}^{2s\cdot\frac{q+1}{2}})}{1-\zeta_{q+1}^{2s}}=0.
\end{eqnarray*}
Similarly,
\begin{eqnarray*}
\zeta_{q+1}^{2s}+\zeta_{q+1}^{4s}+\zeta_{q+1}^{6s}+\cdots+\zeta_{q+1}^{(q-1)s}=\frac{\zeta_{q+1}^{2s}-\zeta_{q+1}^{(q-1)s}\zeta_{q+1}^{2s}}{1-\zeta_{q+1}^{2s}}=-1.
\end{eqnarray*}
\end{proof}

The  value distribution of $\Lambda(b)$ is presented in the following if $f=2$.
\begin{lem}
Let $b\in \Bbb F_{q^{k}}^{*}$, $f|k$ and $\gcd(\frac{k}{f},q-1)=1$. If $f=2$, then the value distribution of $\Lambda(b)$ is
\begin{eqnarray*}\Lambda(b)=\left\{
\begin{array}{ll}
\frac{-q^{2}+(-1)^{\frac{k}{2}}q^{\frac{k}{2}+2}}{q+1}, & \frac{q^{k}-1}{q+1}\ times,\\
\frac{-q^{2}+(-1)^{\frac{k}{2}-1}q^{\frac{k}{2}+1}}{q+1}, & \frac{q(q^{k}-1)}{q+1}\ times.
\end{array} \right.\end{eqnarray*}
\end{lem}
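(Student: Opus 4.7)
The plan is to apply Lemma 6 in the case $f=2$, evaluate the Gauss sums of order $(q^2-1)/(q-1)=q+1$ that appear using the semi-primitive formula in Lemma 1, and then read off the value distribution by a short orthogonality argument on the character sum that remains.

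First, I would verify that the semi-primitive hypothesis holds. Writing $q=p^s$, one checks that $p^s\equiv-1\pmod{q+1}$ and that $j_0:=s$ is the least positive integer with this property (any smaller $j'$ would give $p^{j'}+1<q+1$, which cannot be a positive multiple of $q+1$). Hence $r=q^2=p^{2j_0\gamma}$ forces $\gamma=1$, and Lemma 1 applies to $N=q+1$. A direct case split on the parity of $p$ yields $G(\varphi^j,\chi_1)=q$ for all $j$ when $p=2$ (since $N=q+1$ is odd), and $G(\varphi^j,\chi_1)=(-1)^{j}q$ when $p$ is odd (since $N$ is even, $p$ and $\gamma$ are odd, and $(p^{j_0}+1)/N=1$ is odd, so the first branch in Lemma 1 for $G(\lambda^s,\rho)$ is triggered).

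Next, I would collapse the sign data. Since $\beta^{q+1}$ generates $\mathbb{F}_q^*$, the character $\varphi$ is trivial on $\mathbb{F}_q^*$; because $-1\in\mathbb{F}_q^*$, this gives $\varphi^j(-1)=1$ for every $j$. The hypothesis $\gcd(k/2,q-1)=1$ forces $k/2$ to be odd whenever $q$ is odd (because $2\mid q-1$), so $k/2-1$ is even and therefore $G(\varphi^j,\chi_1)^{k/2-1}=q^{k/2-1}$ in both parities of $p$. Substituting into Lemma 6 and simplifying the rational factor $(q-1)q^2/(q^2-1)=q^2/(q+1)$ reduces the expression to
\begin{equation*}
\Lambda(b)=\frac{-q^2}{q+1}+\frac{(-1)^{k/2}q^{\,k/2+1}}{q+1}\sum_{j=1}^{q}\bar\varphi^{j}\!\bigl(b^{(q^k-1)/(q^2-1)}\bigr).
\end{equation*}

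Finally, I would determine the remaining character sum. Put $y=N_{q^k/q^2}(b)=b^{(q^k-1)/(q^2-1)}\in\mathbb{F}_{q^2}^*$. By orthogonality over the cyclic group generated by $\varphi$, $\sum_{j=0}^{q}\bar\varphi^{j}(y)$ equals $q+1$ when $\varphi(y)=1$ and $0$ otherwise; subtracting the $j=0$ term gives the two possible values $q$ and $-1$. Since $\ker\varphi=\mathbb{F}_q^*$, the sum equals $q$ exactly when $N_{q^k/q^2}(b)\in\mathbb{F}_q^*$. Counting via the surjective norm map: the preimage of $\mathbb{F}_q^*$ has size $(q-1)(q^k-1)/(q^2-1)=(q^k-1)/(q+1)$, and the complement in $\mathbb{F}_{q^k}^*$ has size $q(q^k-1)/(q+1)$. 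Plugging these two cases into the displayed formula produces the two values $(-q^2+(-1)^{k/2}q^{k/2+2})/(q+1)$ and $(-q^2+(-1)^{k/2-1}q^{k/2+1})/(q+1)$ with the claimed frequencies.

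The main obstacle is the bookkeeping in the second step: tracking which branch of Lemma 1 governs $G(\varphi^j,\chi_1)$ for each $j$, and then showing that the coprimality assumption $\gcd(k/2,q-1)=1$ is exactly what is needed to make the resulting sign depend on neither $j$ nor the characteristic. Once this cancellation is in place, the remainder is routine character-sum orthogonality and counting under the norm map.
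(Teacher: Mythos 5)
Your proof is correct, and all the delicate points check out: the semi-primitive data ($j_0=s$, so $\gamma=1$ and Lemma~1 gives $G(\varphi^j,\chi_1)=q$ for $p=2$ and $(-1)^jq$ for odd $p$), the triviality of $\varphi$ on $\Bbb F_q^*=\langle\beta^{q+1}\rangle$, the parity argument (for odd $q$ one has $2\mid q-1$, so $\gcd(k/2,q-1)=1$ forces $k/2$ odd and $k/2-1$ even), and the fiber count $(q-1)\cdot\frac{q^k-1}{q^2-1}=\frac{q^k-1}{q+1}$ for the norm preimage of $\Bbb F_q^*$. Your route differs from the paper's in the middle portion. The paper keeps the $j$-dependent signs $(-1)^j$ alive through the summation: it organizes the computation via the character matrices $\mathbf{T}$ (for $p=2$) and $\mathbf{T'}$ (for odd $p$) of $\Bbb F_{q^2}$, invokes its Lemma~7 on even- and odd-indexed power sums of $\zeta_{q+1}$ to evaluate the vector $(t_0,\dots,t_q)$, obtaining for odd $p$ three a priori distinct values $t_0$, $t_s$, $t_{(q+1)/2}$, and only at the end, in ``combining the systems (7) and (8),'' tacitly uses exactly your observation that $k/2$ is odd, whence $(-q)^{k/2-1}=q^{k/2-1}$ and $t_{(q+1)/2}$ collapses into the generic value. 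You instead spend the hypothesis $\gcd(k/2,q-1)=1$ \emph{before} summing, which makes $G(\varphi^j,\chi_1)^{k/2-1}=q^{k/2-1}$ uniformly in $j$ and in the characteristic; the remaining sum $\sum_{j=1}^{q}\bar\varphi^j(b^{(q^k-1)/(q^2-1)})$ is then a bare orthogonality relation taking only the values $q$ and $-1$ according to whether $N_{q^k/q^2}(b)\in\Bbb F_q^*$, and both Lemma~7 and the matrix bookkeeping become unnecessary. What each approach buys: the paper's computation is more general-purpose, since it evaluates each $t_s$ individually and would still yield the distribution if the signs did not cancel; your version is shorter, unifies the two characteristics, isolates precisely where the coprimality hypothesis enters, and makes explicit the final collapse that the paper leaves to the reader.
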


\begin{proof}
Let $f=2$, then $\frac{q^{f}-1}{q-1}=q+1$. For the multiplicative character $\varphi$ of order $N=q+1=p^{e}+1$, $G(\varphi,\chi_1)$ in Lemma 6 is just a semi-primitive case Gauss sum over $\Bbb F_{q^{2}}$ by Lemma 1. Note that $\varphi(-1)=1$ if $f=2$. By Lemma 6, we have
\begin{eqnarray*}
\Lambda(b)&=&\frac{(-1)^{\frac{k}{2}}q^{2}}{q+1}\sum_{j=0}^{q}
G(\varphi^{^{j}},\chi_1)^{\frac{k}{2}-1}\bar\varphi^{j}(b^{\frac{q^{k}-1}{q^{2}-1}})
\end{eqnarray*}
with $\ord(\varphi)=q+1$. Let $C_{s}^{(q+1,q^{2})}=\beta^{s}\langle\beta^{q+1}\rangle,s=0,1,\cdots,q$, be the cyclotomic classes of order $q+1$ over $\Bbb F_{q^{2}}$. Without loss of generality, we assume that  $\beta=\alpha^{\frac{q^k-1}{q^2-1}}$ and  $\bar\varphi(\beta)=\zeta_{q+1}$. It is clear that $\bar\varphi^{j}(\beta^{ s})=\zeta_{q+1}^{sj}$, $s,j\in\{0,1,\cdots,q\}$.  In fact, for $b_{s}\in \Bbb F_{q^{k}}^{*}$, if $b_{s}^{\frac{q^{k}-1}{q^2-1}}\in C_{s}^{(q+1,q^{2})}$, then  $\bar\varphi^{j}(b_s^{\frac{q^{k}-1}{q^2-1}})=\zeta_{q+1}^{sj}$, $s,j\in\{0,1,\cdots,q\}$. Denote
$$t_{s}=\sum_{j=0}^{q}
G(\varphi^{^{j}},\chi_1)^{\frac{k}{2}-1}\bar\varphi^{j}(b_s^{\frac{q^{k}-1}{q^{2}-1}}),s=0,1,\cdots,q.$$
Hence, $\Lambda(b)$ takes the values $\Lambda(b_{s})=\frac{(-1)^{\frac{k}{2}}q^{2}}{q+1}t_{s}$, $s=0,1,\cdots,q$.
For a fixed $0\leq s\leq q$, the value of $\Lambda(b_{s})$ occurs $\frac{q^{k}-1}{q+1}$ times when $b_{s}$ runs through $\Bbb F_{q^{k}}^{*}$. We only need to give the distribution of $t_{s}$, $s=0,1,\cdots,q$.

(1) Assume that $p=2$. Then $q$ is even.
Let
\begin{eqnarray*}
\mathbf{T} &:=&
\left( \begin{array}{cccccc}
\bar\varphi^{0}(\beta^{0}) & \bar\varphi^{1}(\beta^{0})& \bar\varphi^{2}(\beta^{0}) & \ldots &  \bar\varphi^{q-1}(\beta^{0}) & \bar\varphi^{q}(\beta^{0})\\
\bar\varphi^{0}(\beta) & \bar\varphi^{1}(\beta)& \bar\varphi^{2}(\beta) & \ldots &  \bar\varphi^{q-1}(\beta) & \bar\varphi^{q}(\beta)\\
\bar\varphi^{0}(\beta^{2}) & \bar\varphi^{1}(\beta^{2})& \bar\varphi^{2}(\beta^{2}) & \ldots &  \bar\varphi^{q-1}(\beta^{2}) & \bar\varphi^{q}(\beta^{2})\\
\vdots & \vdots & \vdots & \vdots & \vdots &\vdots\\
\bar\varphi^{0}(\beta^{q-1}) & \bar\varphi^{1}(\beta^{q-1}) & \bar\varphi^{2}(\beta^{q-1}) & \ldots & \bar\varphi^{q-1}(\beta^{q-1}) & \bar\varphi^{q}(\beta^{q-1})\\
\bar\varphi^{0}(\beta^{q}) & \bar\varphi^{1}(\beta^{q}) & \bar\varphi^{2}(\beta^{q}) & \ldots & \bar\varphi^{q-1}(\beta^{q}) & \bar\varphi^{q}(\beta^{q})\\
\end{array} \right)_{(q+1)\times (q+1)}\\
&=&\left( \begin{array}{cccccc}
 1 & 1 & 1 & \ldots &  1 & 1\\
1 & \zeta_{q+1} & \zeta_{q+1}^{2} & \ldots &  \zeta_{q+1}^{q-1} & \zeta_{q+1}^{q}\\
 1 & \zeta_{q+1}^{2} & \zeta_{q+1}^{4} & \ldots &  \zeta_{q+1}^{2(q-1)} & \zeta_{q+1}^{2q}\\
\vdots & \vdots & \vdots & \vdots & \vdots &\vdots\\
1 & \zeta_{q+1}^{q-1} & \zeta_{q+1}^{2(q-1)} & \ldots &  \zeta_{q+1}^{(q-1)^{2}} & \zeta_{q+1}^{q(q-1)}\\
1 & \zeta_{q+1}^{q} & \zeta_{q+1}^{2q} & \ldots &  \zeta_{q+1}^{(q-1)q} & \zeta_{q+1}^{q^{2}}\\
\end{array} \right)_{(q+1)\times (q+1)}\\
\end{eqnarray*}
which is called the character matrix of $\Bbb F_{q^{2}}$.  By Lemma 1, $G(\varphi^{s},\chi_1)=q,1\leq s\leq q$. Hence,
\begin{eqnarray*}
\mathbf{T}\left( \begin{array}{c}
G(\varphi^{0},\chi_1)^{\frac{k}{2}-1}\\
G(\varphi,\chi_1)^{\frac{k}{2}-1}\\
\vdots\\
G(\varphi^{q-1},\chi_1)^{\frac{k}{2}-1}\\
G(\varphi^{q},\chi_1)^{\frac{k}{2}-1}\\
\end{array} \right)
=\mathbf{T}\left( \begin{array}{c}
(-1)^{\frac{k}{2}-1}\\
 q^{\frac{k}{2}-1}\\
\vdots\\
 q^{\frac{k}{2}-1}\\
 q^{\frac{k}{2}-1}\\
\end{array} \right)
=\left( \begin{array}{c}
t_{0}\\
t_{1}\\
\vdots\\
t_{q-1}\\
t_{q}\\
\end{array} \right).
\end{eqnarray*}
Note that for $1\leq s \leq q$,
$\zeta_{q+1}^{s}+\zeta_{q+1}^{2s}+\cdots+\zeta_{q+1}^{qs}=-1$.
Hence, we have
\begin{eqnarray}\left\{
\begin{array}{lll}
t_{0}&=&(-1)^{\frac{k}{2}-1}+q^{\frac{k}{2}},\\
t_{s}&=&(-1)^{\frac{k}{2}-1}-q^{\frac{k}{2}-1}, 1\leq s \leq q.
\end{array} \right.\end{eqnarray}

(2) Let $p>2$. Then $q$ is odd. Let
\begin{eqnarray*}
\mathbf{T'} &:=&
\left( \begin{array}{cccccc}
\bar\varphi^{0}(\beta^{0}) & \bar\varphi^{1}(\beta^{0})& \bar\varphi^{2}(\beta^{0}) & \ldots &  \bar\varphi^{q-1}(\beta^{0}) & \bar\varphi^{q}(\beta^{0})\\
\bar\varphi^{0}(\beta) & \bar\varphi^{1}(\beta)& \bar\varphi^{2}(\beta) & \ldots &  \bar\varphi^{q-1}(\beta) & \bar\varphi^{q}(\beta)\\
\bar\varphi^{0}(\beta^{2}) & \bar\varphi^{1}(\beta^{2})& \bar\varphi^{2}(\beta^{2}) & \ldots &  \bar\varphi^{q-1}(\beta^{2}) & \bar\varphi^{q}(\beta^{2})\\
\vdots & \vdots & \vdots & \vdots & \vdots &\vdots\\
\bar\varphi^{0}(\beta^{\frac{q+1}{2}}) & \bar\varphi^{1}(\beta^{\frac{q+1}{2}}) & \bar\varphi^{2}(\beta^{\frac{q+1}{2}}) & \ldots & \bar\varphi^{q-1}(\beta^{\frac{q+1}{2}}) & \bar\varphi^{q}(\beta^{\frac{q+1}{2}})\\
\vdots & \vdots & \vdots & \vdots & \vdots &\vdots\\
\bar\varphi^{0}(\beta^{q-1}) & \bar\varphi^{1}(\beta^{q-1}) & \bar\varphi^{2}(\beta^{q-1}) & \ldots & \bar\varphi^{q-1}(\beta^{q-1}) & \bar\varphi^{q}(\beta^{q-1})\\
\bar\varphi^{0}(\beta^{q}) & \bar\varphi^{1}(\beta^{q}) & \bar\varphi^{2}(\beta^{q}) & \ldots & \bar\varphi^{q-1}(\beta^{q}) & \bar\varphi^{q}(\beta^{q})\\
\end{array} \right)_{(q+1)\times (q+1)}\\
&=&\left( \begin{array}{cccccc}
 1 & 1 & 1 & \ldots &  1 & 1\\
1 & \zeta_{q+1} & \zeta_{q+1}^{2} & \ldots &  \zeta_{q+1}^{q-1} & \zeta_{q+1}^{q}\\
 1 & \zeta_{q+1}^{2} & \zeta_{q+1}^{4} & \ldots &  \zeta_{q+1}^{2(q-1)} & \zeta_{q+1}^{2q}\\
\vdots & \vdots & \vdots & \vdots & \vdots &\vdots\\
 1& -1 & 1 & \ldots & 1 & -1\\
\vdots & \vdots & \vdots & \vdots & \vdots &\vdots\\
1 & \zeta_{q+1}^{q-1} & \zeta_{q+1}^{2(q-1)} & \ldots &  \zeta_{q+1}^{(q-1)^{2}} & \zeta_{q+1}^{q(q-1)}\\
1 & \zeta_{q+1}^{q} & \zeta_{q+1}^{2q} & \ldots &  \zeta_{q+1}^{(q-1)q} & \zeta_{q+1}^{q^{2}}\\
\end{array} \right)_{(q+1)\times (q+1)}\\
\end{eqnarray*}
which is called the character matrix of $\Bbb F_{q^{2}}$.  By Lemma 1, $G(\varphi^{s},\chi_1)=(-1)^{s}q,1\leq s\leq q$. Hence,
\begin{eqnarray*}
\mathbf{T'}\left( \begin{array}{c}
G(\varphi^{0},\chi_1)^{\frac{k}{2}-1}\\
G(\varphi,\chi_1)^{\frac{k}{2}-1}\\
G(\varphi^{2},\chi_1)^{\frac{k}{2}-1}\\
G(\varphi^{3},\chi_1)^{\frac{k}{2}-1}\\
\vdots\\
G(\varphi^{q-2},\chi_1)^{\frac{k}{2}-1}\\
G(\varphi^{q-1},\chi_1)^{\frac{k}{2}-1}\\
G(\varphi^{q},\chi_1)^{\frac{k}{2}-1}\\
\end{array} \right)
=\mathbf{T'}\left( \begin{array}{c}
(-1)^{\frac{k}{2}-1}\\
(-q)^{\frac{k}{2}-1}\\
q^{\frac{k}{2}-1}\\
(-q)^{\frac{k}{2}-1}\\
\vdots\\
(-q)^{\frac{k}{2}-1}\\
q^{\frac{k}{2}-1}\\
(-q)^{\frac{k}{2}-1}\\
\end{array} \right)
=\left( \begin{array}{c}
t_{0}\\
t_{1}\\
t_2\\
t_3\\
\vdots\\
t_{q-2}\\
t_{q-1}\\
t_{q}\\
\end{array} \right).
\end{eqnarray*}
By Lemma 7, we have
\begin{eqnarray}\left\{
\begin{array}{lll}
t_{0}&=&(-1)^{\frac{k}{2}-1}+\frac{(-q)^{\frac{k}{2}-1}(q+1)}{2}+\frac{(q-1)q^{\frac{k}{2}-1}}{2},\\
t_{s}&=&(-1)^{\frac{k}{2}-1}-q^{\frac{k}{2}-1},1\leq s\leq q \mbox{ and } s\neq \frac{q+1}{2},\\
t_{\frac{q+1}{2}}&=&(-1)^{\frac{k}{2}-1}-\frac{(-q)^{\frac{k}{2}-1}(q+1)}{2}+\frac{(q-1)q^{\frac{k}{2}-1}}{2}.
\end{array} \right.\end{eqnarray}

Combining the systems (7) and (8), the value distribution of $\Lambda(b),b\in \Bbb F_{q^{k}}^{*}$, follows. We remark that the value distribution can be written in a unform way.
\end{proof}

\begin{thm}
Let $f|k$ and $\gcd(\frac{k}{f},q-1)=1$. Let $\mathcal{C}_{D}$ be the linear code defined as in Equation (1) with $a\in \Bbb F_{q}^{*}$.

If $f=1$, then $\mathcal{C}_{D}$ is an optimal $[\frac{q^{k}-1}{q-1},k,q^{k-1}]$ linear code achieving the Griesmer bound.

If $f>1$, then $\mathcal{C}_{D}$ is a $$[\frac{q^{f-1}(q^{k}-1)}{q^{f}-1},k,d\geq \frac{(q-1)q^{f+k-2}-(q^{f}-q)q^{\frac{k+f}{2}-2}}{q^{f}-1}]$$ linear code and the Hamming weight $w(\textbf{c}_{b})$ of a codeword $$\textbf{c}_{b}=(\Tr_{q^{k}/q}(bd_{1}),\cdots,\Tr_{q^{k}/q}(bd_{n}))\in \mathcal{C}_{D},b\in \Bbb F_{q^{k}}^{*},$$ is equal to
$$w(\textbf{c}_{b})=\frac{(q-1)q^{f+k-2}}{q^{f}-1}+\frac{(-1)^{\frac{k}{f}-1}(q-1)q^{f-2}}{q^{f}-1}\sum_{j=1}^{\frac{q^{f}-1}{q-1}-1}
\varphi^j(-1)G(\varphi^j,\chi_1)^{\frac{k}{f}-1}\bar\varphi^j(b^{\frac{q^{k}-1}{q^{f}-1}}),$$
where $\varphi$ is a multiplicative character of order $\frac{q^{f}-1}{q-1}$ of $\Bbb F_{q^{f}}$. In particular, if $f=2$ and $k\equiv 0\pmod{4}$, $\mathcal{C}_{D}$ is a $[\frac{q(q^{k}-1)}{q^{2}-1},k,\frac{q^{k}-q^{\frac{k}{2}}}{q+1}]$ two-weight linear code with the weight enumerator
$$1+\frac{q^{k}-1}{q+1}z^{\frac{q^{k}-q^{\frac{k}{2}}}{q+1}}+\frac{q(q^{k}-1)}{q+1}z^{\frac{q^{k}+q^{\frac{k}{2}-1}}{q+1}};$$
if $f=2$ and $k\equiv 2\pmod{4}$, $\mathcal{C}_{D}$ is a $[\frac{q(q^{k}-1)}{q^{2}-1},k,\frac{q^{k}-q^{\frac{k}{2}-1}}{q+1}]$ two-weight linear code with the weight enumerator
$$1+\frac{q(q^{k}-1)}{q+1}z^{\frac{q^{k}-q^{\frac{k}{2}-1}}{q+1}}+\frac{q^{k}-1}{q+1}z^{\frac{q^{k}+q^{\frac{k}{2}}}{q+1}}.$$
\end{thm}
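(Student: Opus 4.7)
The plan is to combine the length formula in Equation~(5), the intersection count in Equation~(6), and Lemma~6 to obtain $w(\textbf{c}_b)=n-N_b$ as a single explicit expression, and then specialize.

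First I would start from $w(\textbf{c}_b) = n - N_b$ with $n = q^{f-1}(q^{k}-1)/(q^{f}-1)$ and $N_b = q^{k-2} + (q^{k}-q^{f})/(q^{2}(q^{f}-1)) + \Lambda(b)/q^{2}$. Plugging in Lemma~6 and clearing denominators (the constant terms of $\Lambda(b)$ combine with the first two terms of $n-N_b$ into a clean expression of the form $(q-1)q^{f+k-2}/(q^{f}-1)$), a short but careful calculation gives exactly the weight formula stated in the theorem. For $f=1$, Lemma~6 gives $\Lambda(b)=-q$ for every $b\in \Bbb F_{q^{k}}^{*}$, so the weight is constant; the arithmetic collapses to $w(\textbf{c}_b)=q^{k-1}$, which implies $\mathcal{C}_{D}$ is a one-weight $[\,(q^{k}-1)/(q-1),\,k,\,q^{k-1}\,]$ code. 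Optimality is then a direct verification of the Griesmer bound $\sum_{i=0}^{k-1}\lceil q^{k-1}/q^{i}\rceil=(q^{k}-1)/(q-1)$.

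For $f>1$, I would use $|G(\varphi^{j},\chi_{1})|=q^{f/2}$ together with the fact that the sum indexed by $j$ has $(q^{f}-1)/(q-1)-1$ terms. The triangle inequality then yields
\[
\Bigl|\tfrac{(-1)^{k/f-1}(q-1)q^{f-2}}{q^{f}-1}\sum_{j=1}^{(q^{f}-1)/(q-1)-1}\varphi^{j}(-1)G(\varphi^{j},\chi_{1})^{k/f-1}\bar\varphi^{j}(b^{(q^{k}-1)/(q^{f}-1)})\Bigr|\leq \tfrac{(q^{f}-q)q^{(k+f)/2-2}}{q^{f}-1},
\]
producing the stated lower bound on $d$. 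Since $k>f\geq 2$, the bound is strictly positive, which forces $\textbf{c}_b\neq\textbf{0}$ for every $b\in \Bbb F_{q^{k}}^{*}$, hence $\dim \mathcal{C}_{D}=k$.

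For the case $f=2$, I would substitute the two exact values of $\Lambda(b)$ from Lemma~8 directly into $w(\textbf{c}_b)=n-N_b$. The dependence on parity of $k/2$ comes entirely from the sign $(-1)^{k/2}$ inside the two possible values of $\Lambda(b)$. A bookkeeping simplification—multiplying through by $q+1$ and using $q^{k-2}(q^{2}+q+1)-q^{k-2}(q+1)=q^{k}$—collapses both values to the pair $(q^{k}\pm q^{k/2})/(q+1)$ or $(q^{k}\pm q^{k/2-1})/(q+1)$ according to $k\pmod 4$, matching the stated weight enumerators. The corresponding frequencies $\tfrac{q^{k}-1}{q+1}$ and $\tfrac{q(q^{k}-1)}{q+1}$ are inherited verbatim from the multiplicity statement of Lemma~8 by observing that $b\mapsto b^{(q^{k}-1)/(q^{2}-1)}$ is exactly $(q^{k}-1)/(q^{2}-1)$-to-$1$ from $\Bbb F_{q^{k}}^{*}$ onto $\Bbb F_{q^{2}}^{*}$, and each of the $q+1$ cyclotomic cosets contributes equally. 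The main obstacle is the somewhat delicate algebra in fusing the three contributions to $w(\textbf{c}_b)$ into a clean common-denominator form; once that simplification is set up correctly, the specialization $f=1$ and the two-weight conclusions for $f=2$ both fall out by routine substitution.
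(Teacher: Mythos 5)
Your proposal is correct and takes essentially the same route as the paper's proof: computing $w(\textbf{c}_{b})=n-N_{b}$ from Equations (5), (6) and Lemma 6, bounding the character sum via $|G(\varphi^{j},\chi_1)|=q^{f/2}$ and the triangle inequality to get the minimum-distance bound and positivity (hence dimension $k$), verifying the Griesmer bound directly for $f=1$, and substituting the value distribution of Lemma 8 for $f=2$. One small wording slip worth noting: within each congruence class of $k \bmod 4$ the two weights mix the exponents $\frac{k}{2}$ and $\frac{k}{2}-1$ (e.g., $\frac{q^{k}-q^{k/2}}{q+1}$ and $\frac{q^{k}+q^{k/2-1}}{q+1}$ for $k\equiv 0 \pmod 4$) rather than forming a $\pm$ pair with a common exponent, but the substitution you describe produces the correct values, matching the stated enumerators.
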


\begin{proof}
From Equations (5), (6) and Lemma 6, we can easily obtain the weight distribution of the one-weight linear code $\mathcal{C}_{D}$ if $f=1$. Since
$$\sum_{i=0}^{k-1}\lceil \frac{q^{k-1}}{q^{i}}\rceil=q^{k-1}+q^{k-2}+\cdots+1=\frac{q^{k}-1}{q-1},$$
$\mathcal{C}_{D}$ is optimal with respect to the Griesmer bound.

Now we assume that $f>1$. For a codeword $$\textbf{c}_{b}=(\Tr_{q^{k}/q}(bd_{1}),\cdots,\Tr_{q^{k}/q}(bd_{n}))\in \mathcal{C}_{D},b\in \Bbb F_{q^{k}}^{*},$$ the Hamming weight of it equals to $w(\textbf{c}_{b})=n-N_{b}$. Then by Equations (5), (6) and Lemma 6, we have
$$w(\textbf{c}_{b})=\frac{(q-1)q^{f+k-2}}{q^{f}-1}+\frac{(-1)^{\frac{k}{f}-1}(q-1)q^{f-2}}{q^{f}-1}\sum_{j=1}^{\frac{q^{f}-1}{q-1}-1}
\varphi^j(-1)G(\varphi^j,\chi_1)^{\frac{k}{f}-1}\bar\varphi^j(b^{\frac{q^{k}-1}{q^{f}-1}}),$$
where $\varphi$ is a multiplicative character of order $\frac{q^{f}-1}{q-1}$ of $\Bbb F_{q^{f}}$. Note that
\begin{eqnarray*}
& &|\frac{(-1)^{\frac{k}{f}-1}(q-1)q^{f-2}}{q^{f}-1}\sum_{j=1}^{\frac{q^{f}-1}{q-1}-1}
\varphi^j(-1)G(\varphi^j,\chi_1)^{\frac{k}{f}-1}\bar\varphi^j(b^{\frac{q^{k}-1}{q^{f}-1}})|\\
&\leq &\frac{(\sqrt{q^{f}})^{(\frac{k}{f}-1)}(q-1)q^{f-2}}{q^{f}-1}(\frac{q^{f}-1}{q-1}-1).
\end{eqnarray*}
Then
\begin{eqnarray*}
w(\textbf{c}_{b})&\geq& \frac{(q-1)q^{f+k-2}}{q^{f}-1}-\frac{(\sqrt{q^{f}})^{(\frac{k}{f}-1)}(q-1)q^{f-2}}{q^{f}-1}(\frac{q^{f}-1}{q-1}-1)\\
&=&\frac{(q-1)q^{f+k-2}-(q^{f}-q)q^{\frac{k+f}{2}-2}}{q^{f}-1}>0
\end{eqnarray*}
for any $b\in \Bbb F_{q^{k}}^{*}$. This implies that the dimension of $\mathcal{C}_{D}$ is $k$. In particular, for $f=2$, the weight distribution of $\mathcal{C}_{D}$ can be obtained by Equations (5), (6) and Lemma 8.
\end{proof}

\begin{rem}
 By Theorem 2, we find that the weight of a codeword $\textbf{c}_{b},b\in \Bbb F_{q^{k}}^{*}$, is the same for any $a\in \Bbb F_{q}^{*}$. In fact, there exists an element $c\in \Bbb F_{q^{k}}^{*}$ such that $N_{q^{k}/q^{f}}(c)=c^{\frac{q^{k}-1}{q^{f}-1}}=-\frac{1}{a}$ because the norm function $N_{q^{k}/q^{f}}$ is a surjection. Hence, the defining set
\begin{eqnarray*}D&=&\{x\in \Bbb F_{q^{k}}:\Tr_{q^{f}/q}(x^{\frac{q^{k}-1}{q^{f}-1}})=-a\}\\
&=&\{x\in \Bbb F_{q^{k}}:\Tr_{q^{f}/q}((cx)^{\frac{q^{k}-1}{q^{f}-1}})=1\}\\
&=&\{x\in \Bbb F_{q^{k}}:\Tr_{q^{f}/q}(x^{\frac{q^{k}-1}{q^{f}-1}})=1\}.
\end{eqnarray*}
For $f=1$, the code $\mathcal{C}_{D}$ is equivalent to the Simplex code. For $f=2$, the weight distribution of $\mathcal{C}_{D}$ may be new.
\end{rem}

Let $g(x)=\Tr_{q^{f}/q}(x^{\frac{q^{k}-1}{q^{f}-1}})$. If $q=p=2$, then $g(x)$ is a Boolean function and its Walsh transform is defined by
$$\widehat{g}(\omega)=\sum_{x\in \Bbb F_{2^{k}}}(-1)^{g(x)+\Tr_{2^{k}/2}(\omega x)},\omega \in \Bbb F_{2^{k}}.$$
In the following, we determine the weight distribution of $\mathcal{C}_{D}$ defined as in Equation (1) with $a\in \Bbb F_{q}^{*},q=2\mbox{ and }f=3$. By Remark 3, it is sufficient to determine the weight distribution of $\mathcal{C}_{D}$ with the defining set
$$D=\{x\in \Bbb F_{q^{k}}:g(x)=1\}.$$

\begin{thm} Let $f=3,q=2$ and $3|k$. Let $\mathcal{C}_{D}$ be the linear code defined as in Equation (1) with $a\in \Bbb F_{q}^{*}$. Then $\mathcal{C}_{D}$ is a $[4(2^{k}-1)/7,k]$ linear code with at most three weights and its weight distribution is given in Table a, where $Re(x)$ denotes the real part of a complex number $x$.

\[ \begin{tabular} {c} Table a. Weight distribution of $\mathcal{C}_{D}$ if $q=2,f=3$ and $a\in \Bbb F_{q}^{*}$\\
{\begin{tabular}{|c|c|}
  \hline
 Weight & Frequency \\
  \hline
  0 & 1\\
   $\frac{2^{k+1}}{7}+\frac{12}{7}Re((1+\sqrt{-7})^{\frac{k}{3}-1})$ & $\frac{2^{k}-1}{7}$\\
 $\frac{2^{k+1}}{7}-\frac{16}{7}Re((1+\sqrt{-7})^{\frac{k}{3}-2})$ & $\frac{3(2^{k}-1)}{7}$\\
 $\frac{2^{k+1}}{7}-\frac{2}{7}Re((1+\sqrt{-7})^{\frac{k}{3}})$ & $\frac{3(2^{k}-1)}{7}$\\
  \hline
\end{tabular}}
\end{tabular}
\]
\end{thm}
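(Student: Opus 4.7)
The plan is to specialize Theorem~2 to $q=2$, $f=3$, $3\mid k$, and then evaluate the resulting Gauss-sum expression explicitly. After substituting the numerical constants and using $\varphi^j(-1)=1$ (because $-1=1$ in characteristic~$2$), the weight formula of Theorem~2 reduces to
\[
w(\mathbf{c}_b)=\frac{2^{k+1}}{7}+\frac{2(-1)^{k/3-1}}{7}\,T(b),\qquad T(b):=\sum_{j=1}^{6}G(\varphi^j,\chi_1)^{k/3-1}\bar\varphi^{j}\!\bigl(b^{(2^k-1)/7}\bigr),
\]
while the length $n=4(2^k-1)/7$ follows directly from Equation~(5). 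Everything then comes down to evaluating $T(b)$ for each $b\in\mathbb{F}_{2^k}^*$.

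The next step is Frobenius reduction. Since $\Tr_{8/2}(x^2)=\Tr_{8/2}(x)$ in $\mathbb{F}_2$ implies $\chi_1(x^2)=\chi_1(x)$, the substitution $x\mapsto x^2$ in the defining sum yields $G(\varphi^{2j},\chi_1)=G(\varphi^j,\chi_1)$. Thus the six nontrivial Gauss sums take only two complex-conjugate values $G$ and $\bar G$, on the Frobenius orbits $\{1,2,4\}$ and $\{3,5,6\}$ respectively. The key input is the explicit evaluation $G(\varphi,\chi_1)=-1+\sqrt{-7}$. This is the main obstacle: since no power of~$2$ is $\equiv-1\pmod 7$, the present case is \emph{not} semi-primitive and Lemma~1 does not apply. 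One must either invoke the index-$2$ Gauss-sum theory used in \cite{HY}, or compute $G(\varphi,\chi_1)=\sum_{j=0}^{6}\zeta_7^{j}\chi_1(\beta^j)$ by hand from the trace table of $\mathbb{F}_8$, together with the classical quadratic Gauss-sum identity $\zeta_7+\zeta_7^2+\zeta_7^4=(-1+\sqrt{-7})/2$.

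With $G$ in hand, I would partition $\mathbb{F}_{2^k}^*$ into the seven fibres of the map $b\mapsto b^{(2^k-1)/7}\in\mathbb{F}_8^*$, each of size $(2^k-1)/7$, indexed by $s\in\{0,1,\dots,6\}$ via $b^{(2^k-1)/7}=\beta^s$. Splitting $T(b_s)$ according to the two Frobenius orbits of $j$, the inner sums $\sum_{j\in\{1,2,4\}}\zeta_7^{-js}$ and $\sum_{j\in\{3,5,6\}}\zeta_7^{-js}$ depend on $s$ only through the three cosets $\{0\}$, the quadratic residues $\{1,2,4\}$, and the non-residues $\{3,5,6\}$ mod~$7$, taking values in $\{3,\mu,\bar\mu\}$ where $\mu:=(-1+\sqrt{-7})/2$. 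Algebraic simplification in the ring $\mathbb{Z}[\tfrac{1+\sqrt{-7}}{2}]$, using the three identities $\mu=G/2$, $\bar\mu\,G=4$, and $G\bar G=8$, collapses the three resulting expressions to
\[
T(b_0)=6\,Re(G^{k/3-1}),\quad T(b_s)=8\,Re(G^{k/3-2})\ \text{for } s\in\{1,2,4\},\quad T(b_s)=Re(G^{k/3})\ \text{for } s\in\{3,5,6\}.
\]

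The last step is cosmetic: since $G=-\overline{1+\sqrt{-7}}$, one has $Re(G^m)=(-1)^m\,Re((1+\sqrt{-7})^m)$, and multiplying by the overall $(-1)^{k/3-1}$ factor in the weight formula makes every residual sign cancel. The three weight values then acquire exactly the form in Table~a, with multiplicities $(2^k-1)/7$, $3(2^k-1)/7$, $3(2^k-1)/7$ dictated by the sizes of the three cosets; summing to $2^k-1$ also confirms that $\mathcal{C}_D$ has dimension $k$.
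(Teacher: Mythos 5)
Your proposal is correct, but it takes a genuinely different route from the paper's. The paper does not specialize the Gauss-sum formula of Theorem 2 at all (it invokes Theorem 2 only for the dimension): instead it quotes the Walsh spectrum of the Boolean function $g(x)=\Tr_{2^{3}/2}(x^{(2^{k}-1)/7})$ from \cite[Lemma 3.1]{HY} and feeds it into Ding's generic correspondence \cite[Theorem 9]{D}, which gives the weights as $(2n+\widehat{g}(\omega))/4$ for $\omega\in\Bbb F_{2^{k}}^{*}$; Table a then drops out by arithmetic. You instead evaluate the order-$7$ Gauss sums over $\Bbb F_{8}$ directly, and you correctly flag the one genuine obstacle: no power of $2$ is $\equiv -1\pmod 7$, so this is not semi-primitive and Lemma 1 is unusable; one needs the index-$2$ value $G(\varphi,\chi_{1})=-1+\sqrt{-7}$ (under the normalization $\varphi(\beta)=\zeta_{7}$ — the trace table of $\Bbb F_{8}$, where $\Tr_{8/2}(\beta^{j})=0$ exactly for $j\in\{1,2,4\}$, confirms this, and the conjugate choice of character merely permutes the fibres without changing the table). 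From there your Frobenius relation $G(\varphi^{2j},\chi_{1})=G(\varphi^{j},\chi_{1})$, the partition of the exponents mod $7$ into $\{0\}$, the quadratic residues $\{1,2,4\}$ and the non-residues $\{3,5,6\}$, and the identities $\mu=G/2$, $\bar\mu G=4$, $G\bar G=8$ do collapse $T(b)$ to $6\,Re(G^{k/3-1})$, $8\,Re(G^{k/3-2})$ and $Re(G^{k/3})$ on the three classes, and the sign bookkeeping via $Re(G^{m})=(-1)^{m}Re((1+\sqrt{-7})^{m})$ reproduces Table a exactly, frequencies included. In effect you re-prove \cite[Lemma 3.1]{HY} inside this paper's own framework: your argument is self-contained (modulo a one-line small-field computation) and runs parallel to the $f=2$ character-matrix computation of Lemma 8, and it makes structurally visible why exactly three weights occur; the paper's proof is much shorter but outsources the hard index-$2$ Gauss-sum work to \cite{HY} and the code--function dictionary to \cite{D}. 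One small correction: the frequencies summing to $2^{k}-1$ is automatic (there are $2^{k}-1$ choices of $b$) and does not by itself establish that the dimension is $k$; as in the paper, dimension $k$ follows because every $w(\textbf{c}_{b})$ is strictly positive, e.g. from the minimum-distance lower bound already proved in Theorem 2, which for $q=2$, $f=3$ is positive for all $k\geq 3$.
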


\begin{proof}
If $q=2,f=3$, the Walsh spectrum of the Boolean function $g(x)$ for $\omega\in \Bbb F_{2^{k}}^{*}$ was given in \cite[Lemma 3.1]{HY} by Heng and Yue as follows:
\begin{eqnarray*}
 \left\{
\begin{array}{ll}
\frac{1}{7}(8+48Re((1+\sqrt{-7})^{m-1})) &{\frac{2^{k}-1}{7}\ times,}\\
\frac{1}{7}(8-2^{6}Re((1+\sqrt{-7})^{m-2})) & {\frac{3(2^{k}-1)}{7}\ times,}\\
\frac{1}{7}(8-2^{3}Re((1+\sqrt{-7})^{m})) & {\frac{3(2^{k}-1)}{7}\ times.} \\
\end{array} \right.
\end{eqnarray*}
In \cite[Theorem 9]{D}, Ding established a connection between the Boolean function  $g(x)$ and the linear code $\mathcal{C}_{D}$ with its weight distribution given by the following multiset:
$$\{\{\frac{2n+\widehat{g}(\omega)}{4}:\omega\in \Bbb F_{2^{k}}^{*}\}\}\cup \{\{0\}\}.$$
By Theorem 2, the dimension is $k$. Then the weight distribution of $\mathcal{C}_{D}$ follows.
\end{proof}

\begin{exa}
Let $f=2,k=4$ and $q=2$. Then $\mathcal{C}_{D}$ in Theorem 2 is an optimal $[10,4,4]$ two-weight linear code achieving the Griesmer bound. Its weight distribution is given by
$1+5z^{4}+10z^{6}$.
This can be verified by a Magma program.
\end{exa}

\begin{exa}
Let $f=2,k=6$ and $q=2$. Then $\mathcal{C}_{D}$ in Theorem 2 is an optimal $[42,6,20]$ two-weight linear code achieving the Griesmer bound. Its weight distribution is given by
$1+42z^{20}+21z^{36}$.
This can be verified by a Magma program.
\end{exa}

\begin{exa}
Let $f=2,k=4$ and $q=4$. Then $\mathcal{C}_{D}$ in Theorem 2 is a nearly optimal $[68,4,48]$ two-weight linear code with the weight enumerator
$1+51z^{48}+204z^{52}$,
while the corresponding optimal code has parameters $[68,4,50]$ according to \cite{G}. This can be verified by a Magma program.
\end{exa}

\begin{exa}
Let $f=3,k=6$ and $q=2$. Then $\mathcal{C}_{D}$ in Theorem 3 is an optimal $[36,6,16]$ two-weight linear code achieving the Griesmer bound. Its weight distribution is given by
$1+27z^{16}+36z^{20}$.
This can be verified by a Magma program.
\end{exa}

\section{Concluding remarks}
In this paper, we presented a class of linear codes and determined their weight distributions in some special cases. We obtained some good codes in Theorems 2 and 3 which may have new parameters comparing with known linear codes. An application of a linear code $\mathcal{C}$ over $\Bbb F_{q}$ is the construction of secret sharing schemes introduced in \cite{S, YD}. Let $w_{\min},w_{\max}$ denote the minimum and maximum nonzero weight of $\mathcal{C}$, respectively. If $w_{\min}/w_{\max}>\frac{q-1}{q}$, then the linear code $\mathcal{C}$ can be used to construct secret sharing schemes with interesting access structures \cite{YD}. For the code in Proposition $1$ when $f=2$ and $k\equiv 0\pmod{4}$, we have
$$\frac{w_{\min}}{w_{\max}}=\frac{q^{k-1}-q^{\frac{k}{2}-1}}{q^{k-1}+q^{\frac{k}{2}}}>\frac{q-1}{q}.$$
For the code in Proposition $1$ when $f=2$ and $k\equiv 2\pmod{4}$, we have
$$\frac{w_{\min}}{w_{\max}}=\frac{q^{k-1}-q^{\frac{k}{2}}}{q^{k-1}+q^{\frac{k}{2}-1}}>\frac{q-1}{q}$$
if $k\geq 6$. For the code in Theorem $2$ when $f=1$, we have
$$\frac{w_{\min}}{w_{\max}}=1>\frac{q-1}{q}.$$
For the code in Theorem $2$ when $f=2$ and $k\equiv 0\pmod{4}$, we have
$$\frac{w_{\min}}{w_{\max}}=\frac{q^{k}-q^{\frac{k}{2}}}{q^{k}+q^{\frac{k}{2}-1}}>\frac{q-1}{q}.$$
For the code in Theorem $2$ when $f=2$ and $k\equiv 2\pmod{4}$, we have
$$\frac{w_{\min}}{w_{\max}}=\frac{q^{k}-q^{\frac{k}{2}-1}}{q^{k}+q^{\frac{k}{2}}}>\frac{q-1}{q}$$
if $k>2$. Hence, these linear codes of this paper can be employed in secret sharing schemes using the framework in \cite{YD}.

To conclude this paper, we present some open problems in the following:

\begin{enumerate}
\item Determine the weight distribution of $\mathcal{C}_{D}$ for $f\geq 3$ and $a=0$;
\item Determine the weight distribution of $\mathcal{C}_{D}$ for $f\geq 3$, $\gcd(\frac{k}{f},q-1)=1$ and $a\in \Bbb F_{q}^{*}$;
\item Determine the weight distribution of $\mathcal{C}_{D}$ for $\gcd(\frac{k}{f},q-1)\geq2$ and $a\in \Bbb F_{q}^{*}$.
\end{enumerate}
We believe that it could be an interesting work to use new techniques to settle these problems.

\section{Acknowledgement}
The authors are very grateful to the reviewers and the Editor for their detailed comments and suggestions that much improved the quality of this paper.

% For one-column wide figures use

%
% For tables use

%\begin{acknowledgements}
%If you'd like to thank anyone, place your comments here
%and remove the percent signs.
%\end{acknowledgements}

% BibTeX users please use one of
%\bibliographystyle{spbasic}      % basic style, author-year citations
%\bibliographystyle{spmpsci}      % mathematics and physical sciences
%\bibliographystyle{spphys}       % APS-like style for physics
%\bibliography{xuguangkui}   % name your BibTeX data base

% Non-BibTeX users please use

\end{document}